\crefname{ineq}{inequality}{inequalities}
\crefname{fact}{fact}{facts}
\crefname{definition}{definition}{definitions}
\newcommand\llfloor{\lfloor\!\!\lfloor}
\newcommand\rrfloor{\rfloor\!\!\rfloor}
\newcommand\apx[1]{\overset{#1}{\approx}}
\newcommand\notapx[1]{\overset{#1}{\not \approx}}
\title{Quantum Approximate Counting with Nonadaptive Grover Iterations}
\author{Ramgopal Venkateswaran\footnote{The ordering of the authors was randomized.  The authors contributed equally.}}{Computer Science Department, Carnegie Mellon University, USA}{ramgopav@andrew.cmu.edu}{}{}
\author{Ryan O'Donnell}{Computer Science Department, Carnegie Mellon University, USA \and \url{https://www.cs.cmu.edu/~odonnell} }{odonnell@cs.cmu.edu}{}{Supported by NSF grant CCF-1717606. This material is based upon work supported by the National Science Foundation under the grant number listed above. Any opinions, findings and conclusions or recommendations expressed in this material are those of the authors and do not necessarily reflect the views of the National Science Foundation (NSF).}
\authorrunning{R. Venkateswaran and R. O'Donnell} 
\keywords{quantum approximate counting, Grover search} 
\begin{document}

\maketitle

\begin{abstract}
    Approximate Counting refers to the problem where we are given query access to a function $f : [N] \to \{0,1\}$, and we wish to estimate $K = \#\{x : f(x) = 1\}$ to within a factor of $1+\epsilon$ (with high probability), while minimizing the number of queries.
    In the quantum setting, Approximate Counting can be done with $O\bparens{\min\bparens{\sqrt{N/\epsilon}, \sqrt{N/K}\!\bigm/\!\eps}}$ queries.
    It has recently been shown that this can be achieved by a simple algorithm that only uses ``Grover iterations''; however the algorithm performs these iterations adaptively.
    Motivated by concerns of computational simplicity, we consider algorithms that use Grover iterations with limited adaptivity.
    We show that algorithms using only nonadaptive Grover iterations can achieve $O\bparens{\sqrt{N/\eps}}$ query complexity, which is tight.
\end{abstract}

\newpage

\section{Introduction}
\subsection{Grover Search recap}
A famous, textbook algorithm in quantum computing is \emph{Grover Search}~\cite{Gro96}, which solves the following task:
Given is a quantum oracle for a function $f : [N] \to \{0,1\}$, where queries for $f(x)$ may be made in quantum superposition.
It is promised that $K = \#\{x : f(x) = 1\}$ is exactly~$1$.
The task is to find $x^*$ such that $f(x^*) = 1$.
Grover Search solves this problem (with high probability) using $O(\sqrt{N})$ queries.

The algorithm is particularly simple:
First, a state $\ket{s}$ equal to the uniform superposition over all $\ket{x}$ is prepared; this state makes an angle of $\arccos\sqrt{1/N}$ with $\ket{x^*}$.
We write $\ket{x^*}^\bot$ for the state perpendicular to $\ket{x^*}$ making an angle of $\theta^* = \arcsin \sqrt{1/N}$ with~$\ket{s}$.
Then the algorithm repeatedly performs \emph{Grover iterations}, each of which consists of one query followed by the simple ``Grover diffusion'' operation.
The effect of a Grover iteration is to rotate $\ket{s}$ by an angle of~$2 \theta^*$; thus after $r$~rotations the angle of $\ket{s}$ from $\ket{x^*}^\bot$ is $(2r+1)\theta^*$.
Setting $r \approx \tfrac{\pi}{4} \sqrt{N}$, the algorithm makes $O(\sqrt{N})$ queries and ends up with a state at angle approximately $\tfrac{\pi}{2}$ from $\ket{x^*}^\bot$; measuring then results in~$\ket{x^*}$ with probability $\sin^2((2r+1)\theta^*) \approx \sin^2 \tfrac{\pi}{2} = 1$.

In this form the algorithm relies on the assumption $K = 1$.
For $K \geq 1$, the only change is that the parameter $\theta^*$ becomes $\arcsin \sqrt{K/N}$.
Thus if the correct value of $K$ is known to the algorithm (and we assume for simplicity that $K \leq N/2$), it can choose $r \approx \tfrac{\pi}{4} \sqrt{N/K}$ and solve the search problem using $O(\sqrt{N/K})$ iterations/queries.
This algorithm also works if the algorithm knows an estimate $K'$ of $K$ that is correct up to small multiplicative error; say, $K' \apx{1.1} K$.
Here we are using the following notation:
\begin{notation}
    For $a, b, \eta > 0$ we write $a \apx{1+\eta} b$ if $\frac{1}{1+\eta} \leq a/b \leq 1+\eta$.
\end{notation}

If $K$ is \emph{unknown} to the algorithm, one possibility is try all estimates $K' = 1$, $1.1$, $1.1^2$, $1.1^3,\  \dots,\ N/2$.
The total number of Grover iterations (hence queries) will be $O(\sum_{i} \sqrt{N/1.1^i}) = O(\sqrt{N})$.\footnote{One must take a small amount of care to bound the overall failure probability without incurring a log factor.}
A strategy with improved query complexity for large~$K$ was given by Boyer, Brassard, H\o{}yer, and Tapp~\cite{BBHT98}; in brief, for $i = 0, 1, 2, \dots$ it tries a random number of rotations in the range $[1,1.1^i]$, stopping if ever an $x \in f^{-1}(1)$ is found.
This algorithm solves the search problem using $O(\sqrt{N/K})$ iterations, despite not knowing~$K$ in advance.

\subsection{Approximate Counting}   \label{sec:ac}
A natural related problem, called \emph{Approximate Counting} and introduced in~\cite{BBHT98}, is to \emph{estimate}~$K$.
More precisely, given as input a parameter $\eps \geq 0$, the task is to output a number $\wh{K}$ such that (with high probability) $\wh{K} \apx{1+\eps} K$.
To keep the exposition simple, in this paper we will make the following standard assumptions:
\begin{itemize}
    \item $1/N \leq \eps \leq 1$ (setting $\eps = 1/N$ just yields the problem of exact counting --- any smaller value of~$\eps$ does not change the problem);
    \item $K \leq N/2$ (otherwise there is generally a dependence on $N-K$, since one can switch the roles of $0$ and $1$ in $f$'s output);
    \item $K \neq 0$ (generally, all algorithms can easily be extended to work in the case of~$K = 0$, with query complexity being the worst-case query complexity over all $K > 0$).
\end{itemize}
The quantum Approximate Counting problem was solved with optimal query complexity by Brassard, H\o{}yer, Mosca, and Tapp~\cite{BHMT02}.
Combining quantum Fourier transform ideas from Shor's Algorithm with the ideas behind Grover Search, their algorithm solves Approximate Counting using $O\parens{\sqrt{N/K}\!\bigm/\!\eps}$ queries.

Let us make some remarks about this query complexity.
First, note that the bound takes $K$ into account, even though~$K$ is (initially) unknown to the algorithm.
Second, although $K$ could be as small as~$1$, the worst-case query complexity over all~$K$ need \emph{not} be $\Omega\parens{\sqrt{N}\!\bigm/\!\eps}$.
(Indeed, this would lead to an illogical query complexity of $\Omega(N^{3/2})$ if one set $\eps = 1/N$ to do exact counting.)
Instead, note that an algorithm can first run the \cite{BHMT02} algorithm with $\eps = 1$, expending $O\parens{\sqrt{N/K}} \leq O(\sqrt{N})$ queries and learning a preliminary estimate $K' \apx{2} K$.
Now since~$K$ is an integer, there is no point in trying to approximate it to a factor better than~$1+1/K$, hence better than~$1 + 1/(2K')$.
Thus the algorithm can now raise the initial input~$\eps$ to $1/(2K')$ if necessary, and \emph{then} run \cite{BHMT02} to obtain its final estimate.
This yields a final query complexity of
\[
    O\parens{\sqrt{N/K}\!\bigm/\!\max\{\eps, 1/K\}} = \min\{O(\sqrt{NK}), O\parens{\sqrt{N/K}\!\bigm/\!\eps}\} \leq O(\sqrt{N/\eps})
\]
(where in the last inequality we took the geometric mean).
This $K$-independent bound of $O(\sqrt{N/\eps})$ is logical: the smallest $\eps$ one should ever take is $\eps = 1/N$, and this leads to a query complexity of $O(N)$ for the general case of exact counting.
By similar reasoning one can obtain the more precise fact that exact counting can done with $O(\sqrt{NK})$ queries.
Finally, we remark that \cite{BHMT02}'s query complexity was shown to be optimal by Nayak and Wu~\cite{NW99}.

Let us also briefly mention the quantum \emph{Amplitude Estimation} problem, which is essentially the same as the Approximate Counting problem except that the ``initial angle'' $\theta^*$ need not be of the form $\arcsin\sqrt{K/N}$ for some integer~$K$, but can be any value.
The solution to the Amplitude Estimation problem in~\cite{BHMT02} is a widely used tool in quantum algorithm design, and leads to  quadratic speedups over classical algorithms for a variety of statistical problems.

\subsection{Simpler and nonadaptive?}
Although the Approximate Counting algorithm from~\cite{BHMT02} has optimal query complexity, there has recently been a lot of interest in simplifying it~\cite{Wie19,SURTOY20,AR20}.
In particular the latter two of these just-cited works strove to replace it with an algorithm that \emph{only} uses Grover iterations, both for analytic simplicity and practical simplicity (the controlled amplifications of~\cite{BHMT02} being particularly problematic for NISQ devices).
The work of Aaronson and Rall~\cite{AR20} provably succeeds at this challenge, providing an algorithm that solves the Approximate Counting problem using $O\parens{\sqrt{N/K}\!\bigm/\!\eps}$ Grover iterations (hence queries).
Briefly, the Aaronson--Rall algorithm has a first phase (somewhat similar to the \cite{BBHT98} algorithm) that performs a geometrically increasing sequence of Grover iterations until $K$ can be estimated up to a constant factor of~$1.1$ (see \Cref{thm:AR} herein).
This requires $O(\sqrt{N/K})$ iterations.
In the second phase, their algorithm performs a kind of binary search to improve the approximation factor to~$1+\eps$; each step of the binary search requires additional Grover iterations, totalling $O\parens{\sqrt{N/K}\!\bigm/\!\eps}$ in the end.

From the point of view of practicality and simplicity, there is a downside to the Aaronson--Rall algorithm, which is that its Grover iterations are \emph{adaptive} (especially in the second phase of the algorithm).
In other words, the steps of the algorithm involve many repetitions of the following: performing some Grover iterations, measuring, and doing some classical computation to decide how many Grover iterations to do in the next step.
It has been argued that this repeated switching between quantum and classical computation could be undesirable in practice.
Indeed, the final open question in~\cite{AR20} concerned the optimal query complexity of Approximate Counting using \emph{nonadaptive} Grover iterations.
This version of the problem was also stressed and studied in~\cite{SURTOY20}, but without any provable guarantees being provided. 

Other recent developments in the area of approximate counting include \cite{grinko2020iterative,nakaji2020faster}, which propose variants of the algorithm from~\cite{AR20} but with improved constant factors. As well, the work~\cite{burchard2019lower} proves a lower bound for the query complexity of approximate counting in the parallel case.

\subsection{Our results}    \label{sec:our}
We investigate the problem of Approximate Counting using only nonadaptive Grover iterations.
Note that for this version of the problem, there is no hope of obtaining the query complexity  $O\parens{\sqrt{N/K}\!\bigm/\!\eps}$ that improves as a function of~$K$.
To see this, suppose even that $\eps$ is fixed to~$1$.
If the algorithm is to achieve query complexity $O(\sqrt{N/K})$, then it must be able to achieve $O(1)$ query complexity when $K = \Theta(N)$.
Since it is nonadaptive, this means it must \emph{always} make only $O(1)$ queries.
But this is impossible, as even for \emph{adaptive} algorithms it is known that $\Omega(\sqrt{N})$ queries are required in the case of $K = O(1)$, $\eps = 1$.

In other words, with nonadaptive algorithms we can only hope to achieve the optimal query complexity that is independent of~$K$, namely $O(\sqrt{N/\eps})$.
In this work we indeed show this is achievable.
Our main theorem is:
\begin{theorem}                                     \label{thm:main}
    There is an algorithm for  quantum Approximate Counting  that uses only \emph{nonadaptive} Grover iterations, and that has a query complexity of \mbox{$O(\sqrt{N/\eps})$} (and minimal additional computational overhead).
\end{theorem}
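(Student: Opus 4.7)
The plan is to build a single nonadaptive schedule of Grover iterations of total cost $O(\sqrt{N/\eps})$ such that, after classical post-processing of its outcomes, one recovers $\wh K \apx{1+\eps} K$ with high probability. I would partition the candidate range $K \in [1, N/2]$ into dyadic scales $K_s = 2^s$ for $s = 0, 1, \ldots, \lfloor \log_2(N/2) \rfloor$, and assign each scale a query sub-budget $B_s = O\bparens{\min\{\sqrt{N K_s},\ \sqrt{N/K_s}/\eps\}}$. A direct geometric-series calculation yields $\sum_s B_s = O(\sqrt{N/\eps})$: the small-scale regime $K_s \leq 1/\eps$ sums as $\sum \sqrt{N K_s} = O(\sqrt{N/\eps})$ and the large-scale regime $K_s > 1/\eps$ sums as $\sum \sqrt{N/K_s}/\eps = O(\sqrt{N/\eps})$. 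This allocation reflects the fact that for $K_s \leq 1/\eps$ the problem effectively becomes exact counting (requiring $\Theta(\sqrt{N K_s})$ queries) while for $K_s > 1/\eps$ genuine $\eps$-approximation costs $\Theta(\sqrt{N/K_s}/\eps)$ queries.

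Within each scale $s$ I would design a nonadaptive sub-schedule $\mathcal{S}_s$ consisting of Grover experiments at a geometric sequence of rotation counts $r_s^{(d)} = 2^d \lceil\sqrt{N/K_s}\rceil$ for $d = 0, 1, \ldots, D_s$, each repeated $O(1)$ times, where $D_s$ is chosen so that $r_s^{(D_s)}$ matches the per-scale budget (i.e.\ $r_s^{(D_s)} \approx \sqrt{N K_s}$ in the exact-counting regime or $r_s^{(D_s)} \approx \sqrt{N/K_s}/\eps$ in the refinement regime). Geometric summation then gives the total cost of $\mathcal{S}_s$ as $O(r_s^{(D_s)}) = O(B_s)$. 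To identify the correct scale I would prepend a nonadaptive analogue of the Aaronson--Rall/BBHT Phase~1: for each $j \leq \tfrac12 \log_2 N$, a constant number of Grover experiments at rotation count $2^j$ (or a random count in $\{0,\ldots,2^j\}$), summing to $O(\sqrt{N})$ queries and producing a factor-$1.1$ estimate of $K$ with high probability. In post-processing I would read off the scale $s^*$ from the Phase~1 data, then apply a classical estimator to the data from $\mathcal{S}_{s^*}$ to produce $\wh K$.

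The \textbf{main obstacle} is making each sub-schedule $\mathcal{S}_s$ genuinely nonadaptive: Aaronson--Rall's Phase~2 is an adaptive binary search whose depth-$d$ test is centred at a midpoint determined by prior measurement outcomes. The workaround is the observation that in the fixed schedule, the level-$d$ empirical value $\wh{p}_d \approx \sin^2\bparens{(2r_s^{(d)}+1)\theta^*}$ determines $\theta^*$ only modulo the period $\pi/(2r_s^{(d)}+1)$, but once the level-$(d-1)$ estimate has narrowed $\theta^*$ to an interval of roughly two level-$d$ periods, the level-$d$ measurement unambiguously halves it. Thus the binary-search branching logic migrates out of the quantum schedule and into a classical post-processor that walks $d = 0, 1, \ldots, D_s$ iteratively, refining the candidate interval at each step. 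The remaining work is a careful per-level failure-probability analysis: a union bound over $O(\log(1/\eps))$ levels and $O(\log N)$ scales may require inflating the $O(1)$ repetitions to $O(\log\log(N/\eps))$, which does not affect the $O(\sqrt{N/\eps})$ query bound, together with verifying that the geometric-sum budgets really cap the total query count.
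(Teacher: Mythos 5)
Your outer architecture matches the paper's: a nonadaptive Aaronson--Rall-style first phase of cost $O(\sqrt{N})$ run in parallel with per-scale refinement schedules whose budgets $\min\{\sqrt{NK_s},\,\sqrt{N/K_s}/\eps\}$ sum geometrically to $O(\sqrt{N/\eps})$, followed by classical post-processing that uses the phase-one data only to select which scale's data to read. That bookkeeping is exactly what the paper does (its choice $\eps' = \max(\eps, 1/(2N\sin^2\theta'))$ is your exact-counting cutoff at $K\approx 1/\eps$), and the geometric-sum accounting is fine.

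The genuine gap is in your workaround for nonadaptivity inside each scale, which is precisely the content the paper's \Cref{lem:estimateToEps} exists to supply. You claim that with fixed rotation counts $r_s^{(d)} = 2^d\lceil\sqrt{N/K_s}\rceil$, once the level-$(d-1)$ data has localized $\theta^*$ to about two level-$d$ periods, the level-$d$ statistic $\sin^2\bparens{(2r_s^{(d)}+1)\theta^*}$ ``unambiguously halves'' the interval. It does not: $\sin^2$ is symmetric about every multiple of $\pi/2$ and flat near its extremes, so if the image of the current interval under multiplication by the level-$d$ rotation straddles a peak or trough (and for a \emph{fixed} doubling schedule there are values of $\theta^*$ for which this happens at many consecutive levels), the two halves of the interval produce identical or nearly identical coin biases. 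This is an information-theoretic obstruction, not a failure-probability issue, so inflating the $O(1)$ repetitions to $O(\log\log(N/\eps))$ cannot repair it. Aaronson--Rall escape it adaptively via their Rotation Lemma, choosing each rotation count as a function of the current interval so that the interval lands in a single monotone quadrant; a nonadaptive algorithm must instead build a sufficiently rich \emph{fixed} family of rotation counts. That is what the paper does: multipliers on a ratio-$1.01$ geometric grid up to $O(1/(\theta'\eps'))$ (\Cref{lem:split}) combined with additive offsets near $\pi/(4.8\theta')$ that shift any bad pair into the same quadrant while keeping a $\approx\pi/8$ separation (\Cref{lem:refine}), and then a tournament post-processing over all $O(1/\eps')$ candidate angles (\Cref{lem:disting}) in which each rotation count is reused to separate many candidate pairs, which is what keeps the per-scale cost at $O(\log(1/\delta')/(\theta'\eps'))$ instead of picking up an extra $1/\eps$ factor. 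Your proposal needs some substitute for this quadrant-alignment mechanism before the classical interval-walk (or any other post-processor) can be sound.
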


We also briefly sketch an extension of our algorithm achieving improved query complexity in the setting where we are allowed multiple rounds of nonadaptive Grover iterations (as opposed to just one).

\section{Preliminaries}
We will assume throughout that $K \leq 2^{-20}N$.\footnote{Our work would be fine with, say, $K \leq N/8$, but we put $2^{-20}$ so as to able to cite~\cite{AR20} as a black box.}
This without loss of generality since we may artificially replace $N$ by $2^{20}N$ and extend $f$ to $f : [2^{20} N] \to \{0,1\}$, with $f(x) = 0$ for $x > N$.
We fix
\[
    \theta^* = \arcsin\sqrt{K/N},
\]
sometimes called the ``Grover angle''.
Recall that a query algorithm based on Grover iterations has the following property:
At the cost of $q \in \N$ ``queries'', it can ``flip a coin with bias $\sin^2((2q+1) \theta^*)$''.
By repeating this $t$ times, it can obtain~$t$ independent flips of this coin.
It is statistically sufficient to retain only the average of the coin flip outcomes, which is a random variable distributed as $\frac{1}{t}\text{Bin}(t, \sin^2((2q+1) \theta^*))$.
These observations lead to the following:
\begin{notation}
    For real $r \geq 1$ we write $\llfloor r \rrfloor$ for the largest odd integer not exceeding~$r$, and we write $p(r) = \sin^2( \llfloor r \rrfloor \theta^*)$.
\end{notation}
\begin{definition}
\label{definition: grover-schedule}
    A \emph{Grover schedule} consists of two sequences: $R = (r_1, \dots, r_m)$ (each real $r_i \geq 1$) and $T = (t_1, \dots, t_m)$ (each $t_i \in \N^+$).
    \emph{Performing} this Grover schedule refers to obtaining independent random variables $\wh{\bp}_1, \dots, \wh{\bp}_m$, where $\wh{\bp}_i$ is distributed as $\frac{1}{t_i}\text{Bin}(t_i, p(r_i))$.
\end{definition}

A nonadaptive Grover iteration algorithm for Approximate Counting is simply an algorithm that performs one fixed Grover schedule, and produces its final estimate~$\wh{K}$ by classically post-processing the results.
One can more generally study algorithms with ``$s$~rounds of nonadaptivity''; this simply means that $s$~Grover schedules are used, but they may be chosen adaptively.
\begin{fact}    \label{fact:schedule-calcs}
    Performing the Grover schedule $R, T$ uses at most $\tfrac12\sum_i r_i t_i$ queries.
\end{fact}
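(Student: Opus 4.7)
The plan is straightforward query-counting bookkeeping: translate the schedule's real-valued parameters $r_i$ back into an integer number of Grover iterations, then sum over $i$.

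First I would recall the correspondence already recorded in the preliminary remarks: $q \in \N$ Grover iterations (hence $q$ queries) produce a single Bernoulli sample with bias $\sin^2((2q+1)\theta^*)$. So to realize a coin with bias $\sin^2(k\theta^*)$ for a positive \emph{odd} integer $k$, one sets $2q+1 = k$ and pays exactly $(k-1)/2$ queries. Applying this to the $i$-th component of the schedule, let $k_i := \llfloor r_i \rrfloor$, so that $k_i$ is a positive odd integer with $k_i \leq r_i$. By \Cref{definition: grover-schedule}, producing $\wh{\bp}_i$ requires $t_i$ independent samples, each from a coin of bias $p(r_i) = \sin^2(k_i \theta^*)$; by the preceding sentence each such sample costs $(k_i-1)/2$ queries, for a total of $t_i(k_i-1)/2$ queries in the $i$-th block.

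Summing over $i$ and using the trivial bound $k_i - 1 \leq r_i$ (valid for every $r_i \geq 1$), the overall query cost is
\[
\sum_i t_i \cdot \frac{k_i - 1}{2} \;\leq\; \frac{1}{2}\sum_i r_i t_i,
\]
as required. There is no real obstacle here; the only boundary case worth a sanity check is $k_i = 1$ (i.e., $1 \leq r_i < 3$), in which the $i$-th block contributes zero queries, consistent with the $q = 0$ case of the above correspondence producing the ``free'' bias $\sin^2 \theta^* = K/N$ by simply sampling a uniformly random input and reading off $f$'s value without any Grover rotation.
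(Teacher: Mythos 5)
Your proof is correct and follows exactly the accounting the paper intends: the paper states this fact without a separate proof because it follows directly from the preliminary observation that $q$ queries yield a coin of bias $\sin^2((2q+1)\theta^*)$, so setting $2q+1 = \llfloor r_i \rrfloor$ gives $q = (\llfloor r_i\rrfloor - 1)/2 \leq r_i/2$ per flip and $t_i$ flips per component, summing to the claimed bound. Your argument is the same bookkeeping, so there is nothing to add.
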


\subsection{On how well we need to approximate $\theta^*$}

We will use the following elementary numerical fact:
\begin{lemma}                                       \label{lem:close}
    Suppose for real $0 \leq k,k' \leq N$ and $\eta \leq 1$ that $\arcsin\sqrt{k'/N} \apx{1+\eta} \arcsin\sqrt{k/N}$.
    Then $k' \apx{1+3\eta} k$.
\end{lemma}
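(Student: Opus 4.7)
The plan is to work in angle variables $\theta = \arcsin\sqrt{k/N}$ and $\theta' = \arcsin\sqrt{k'/N}$, both of which lie in $[0,\pi/2]$ since $k,k' \leq N$. In these variables the hypothesis becomes $\theta' \apx{1+\eta} \theta$, and since $k = N\sin^2\theta$ and $k' = N\sin^2\theta'$, the goal becomes showing $\sin^2\theta' \apx{1+3\eta} \sin^2\theta$.

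The central tool I would use is the elementary fact that the secant slope $\sin(x)/x$ is (strictly) decreasing on $(0,\pi/2]$, which follows immediately from concavity of $\sin$ on this interval together with $\sin 0 = 0$. From this I would deduce that for any $0 < \alpha \leq \beta \leq \pi/2$ one has $\sin\beta \leq (\beta/\alpha)\sin\alpha$. Applying this with $(\alpha,\beta)=(\theta,\theta')$ in the case $\theta \leq \theta'$ gives $\sin\theta' \leq (\theta'/\theta)\sin\theta \leq (1+\eta)\sin\theta$; applying it with $(\alpha,\beta)=(\theta',\theta)$ in the case $\theta' \leq \theta$ gives $\sin\theta \leq (\theta/\theta')\sin\theta' \leq (1+\eta)\sin\theta'$. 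Combining the two cases yields $\sin\theta' \apx{1+\eta} \sin\theta$.

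Squaring this gives $\sin^2\theta' \apx{(1+\eta)^2} \sin^2\theta$, and I would finish with the one-line observation that $(1+\eta)^2 = 1 + 2\eta + \eta^2 \leq 1 + 3\eta$ whenever $\eta \leq 1$ (since $\eta^2 \leq \eta$). Translating back via $k = N\sin^2\theta$ and $k' = N\sin^2\theta'$ delivers $k' \apx{1+3\eta} k$.

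There is no real obstacle; the argument is entirely elementary. The only point that deserves brief care is verifying that $\theta$ and $\theta'$ both remain in the concavity window $[0,\pi/2]$ so that the secant-slope monotonicity applies, but this is immediate from $k,k' \leq N$. I would not anticipate any subtle loss — the factor of $3$ in the conclusion is exactly what comes out of the $(1+\eta)^2 \leq 1+3\eta$ bound, and the rest of the estimate is tight up to constants.
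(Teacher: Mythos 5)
Your proof is correct. Note that the paper itself states \Cref{lem:close} without proof, calling it an ``elementary numerical fact,'' so there is no in-paper argument to compare against; your write-up simply supplies the missing verification. The route you take is a natural one: the monotonicity of $\sin(x)/x$ on $(0,\pi/2]$ (equivalently, concavity of $\sin$ with $\sin 0=0$) gives $\sin\theta' \apx{1+\eta} \sin\theta$ from $\theta' \apx{1+\eta} \theta$ --- with the trivial direction in each case coming from monotonicity of $\sin$ itself --- and squaring plus $(1+\eta)^2 \le 1+3\eta$ for $\eta \le 1$ finishes it, which is exactly where the hypothesis $\eta \le 1$ and the constant $3$ enter. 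The only cosmetic caveat is the degenerate case $k=0$ or $k'=0$, where the $\apx{}$ notation (defined in the paper only for positive quantities) makes the hypothesis vacuous, so nothing needs to be said; your argument as written already covers every case in which the statement has content.
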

This lemma helps us show that approximating $\theta^*$ well is equivalent to approximating~$K$ well:
\begin{proposition}                                     \label{prop:closeK}
    Suppose $\theta \apx{1+\eps/6} \theta^*$.
    If $\kappa' \in \R$ satisfies $\theta = \arcsin\sqrt{\kappa'/N}$, and $K'$ is the nearest integer to~$\kappa'$, then $K' \apx{1+\eps} K$.
\end{proposition}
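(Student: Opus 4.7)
The plan is to chain two approximation steps: first transfer the approximation $\theta \apx{1+\eps/6}\theta^*$ from the angles to the preimages $\kappa'$ and $K$ via \Cref{lem:close}, and then absorb the extra error introduced by rounding $\kappa'$ to its nearest integer $K'$. Concretely, I would apply \Cref{lem:close} with $k = K$, $k' = \kappa'$, and $\eta = \eps/6$ (the hypothesis $\eta \leq 1$ holds since $\eps \leq 1$) to obtain
\[
    \kappa' \apx{1+\eps/2} K,
\]
which in particular gives the two-sided additive bound $|\kappa' - K| \leq (\eps/2)\,K$.

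To pass from $\kappa'$ to $K'$, I would split on the size of $K$ relative to $1/\eps$. When $K$ is small---say $K\eps$ is below a small absolute constant---the additive bound above forces $\kappa'$ strictly into the interval $(K - 1/2,\, K + 1/2)$, so rounding returns $K' = K$ and the desired conclusion $K' \apx{1+\eps} K$ is trivial. When $K$ is large, I would combine $K' \in [\kappa' - 1/2,\, \kappa' + 1/2]$ with $\kappa' \in [K/(1+\eps/2),\, K(1+\eps/2)]$ and check the two required inequalities $K/(1+\eps) \leq K' \leq K(1+\eps)$ directly; in this regime the additive rounding error of $1/2$ becomes a multiplicative error of order $1/K$ that is dominated by the existing $\eps/2$ slack in $\kappa'$, so the product of the two approximations still fits under $1+\eps$.

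The main obstacle will be the bookkeeping near the boundary $K\eps \sim 1$: both the small-$K$ argument (which demands $\kappa'$ to stay strictly within $1/2$ of $K$) and the large-$K$ argument (which composes two multiplicative errors and must still fit under $1+\eps$) leave very little slack. Consequently the factor $\eps/6$ in the hypothesis, the threshold separating the two cases, and the standing assumption $\eps \leq 1$ need to be juggled together. Once a threshold is chosen so that each side of the dichotomy closes, the remaining verification is a routine calculation.
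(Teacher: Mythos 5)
Your proposal matches the paper's proof essentially step for step: it applies \Cref{lem:close} with $\eta = \eps/6$ to get $\kappa' \apx{1+\eps/2} K$ (hence $|\kappa' - K| \leq (\eps/2)K$), and then splits on whether $(\eps/2)K < 1/2$ --- in which case rounding returns $K' = K$ exactly --- or $(\eps/2)K \geq 1/2$, in which case the rounding error of $1/2$ is absorbed as another $(\eps/2)K$, giving $|K'-K| \leq \eps K$. The threshold you were worried about is exactly $(\eps/2)K = 1/2$, which is the paper's choice, so no further juggling is needed.
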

\begin{proof}
    Since  $\theta \apx{1+\eps/6} \theta^*$, \Cref{lem:close} tells us that $\kappa' \apx{1+\eps/2} K$, and hence
    \begin{equation}    \label[ineq]{ineq:whoa}
        |\kappa' - K| \leq (\eps/2) K.
    \end{equation}
    We also have $|\kappa' - K'| \leq 1/2$, and hence $|K' - K| \leq (\eps/2) K + 1/2$.
    But we can assume $(\eps/2) K \geq 1/2$, as otherwise \Cref{ineq:whoa} implies $|\kappa' - K| < 1/2$ and hence $K' = K$.
    Thus $|K' - K| \leq (\eps/2)K + (\eps/2) K = \eps K$; i.e., $K' \apx{1+\eps} K$.
\end{proof}

\begin{lemma}                                     \label{lem:reducedEps}
    Given some $\theta' \apx{1.11} \theta^*$, estimating $\theta^*$ to a factor of $1 + 1/(2N\sin^2{\theta'})$ is at least as good as estimating $\theta^*$ to a factor of $1 + \eps / 6$.
\end{lemma}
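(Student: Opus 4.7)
My plan is to read the lemma as the assertion that a precision of $1+\tilde\eta$ on $\theta^*$, with $\tilde\eta := 1/(2N\sin^2\theta')$, is \emph{sufficient} to produce a value $\widehat K$ (obtained by rounding $\kappa' = N\sin^2\theta$ to the nearest integer) satisfying $\widehat K \apx{1+\eps} K$---i.e., the same final guarantee \Cref{prop:closeK} delivers from the tighter precision $1+\eps/6$. The argument naturally splits into two cases depending on which of $\tilde\eta$ and $\eps/6$ is smaller.

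In the easy case $\tilde\eta \leq \eps/6$, the prescribed precision is at least as tight as $1+\eps/6$, so \Cref{prop:closeK} applies off the shelf and gives $\widehat K \apx{1+\eps} K$.

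In the hard case $\tilde\eta > \eps/6$, we have $\eps < 6\tilde\eta = 3/(N\sin^2\theta')$. Combining $\theta' \apx{1.11} \theta^*$ with the small-angle regime $\theta^*, \theta' \leq 2^{-10}$ (which follows from $K \leq 2^{-20}N$) shows that $\sin^2\theta^*/\sin^2\theta'$ lies in $[1/(1.11)^2,\,(1.11)^2]$ to within negligible higher-order terms. Applying \Cref{lem:close} to $\theta \apx{1+\tilde\eta} \theta^*$ then yields $\kappa' \apx{1+3\tilde\eta} K$, i.e., $|\kappa'-K| \leq 3\tilde\eta K = (3/2)\sin^2\theta^*/\sin^2\theta'$, a small absolute constant. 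The case hypothesis simultaneously forces $\eps K$ to be a comparable constant, and integer-ness of $K$ plus the rounding step then places $\widehat K$ inside the $(1+\eps)$-window around $K$, as in the final step of the proof of \Cref{prop:closeK}.

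The main obstacle is constant-chasing in the hard case: the bound $|\kappa'-K| \leq (3/2)(1.11)^2 \approx 1.85$, plus up to $1/2$ rounding slack, puts $\widehat K$ within $2$ of $K$, whereas $(1+\eps)$-approximating the integer $K$ when $\eps K < 2$ requires $\widehat K$ to lie within $1$ of $K$. Closing this gap will likely require either a sharper quantitative relation between $\sin^2\theta^*$ and $\sin^2\theta'$ (exploiting that the small-angle approximation is nearly exact at this scale) or a rounding strategy that restricts $\widehat K$ to the window $[N\sin^2\theta'/(1.11)^2,\,(1.11)^2 \cdot N\sin^2\theta']$ predicted by $\theta'$---this window is guaranteed to contain $K$ and forces the rounded output into the allowable range.
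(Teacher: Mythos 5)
Your easy case ($\tilde\eta \le \eps/6$) is fine, but the hard case --- the only case this lemma exists to handle --- has a genuine gap, and it is not merely constant-chasing: the statement as you have framed it (a raw factor-$(1+\tilde\eta)$ estimate of $\theta^*$, rounded via $\kappa' = N\sin^2\theta$, yields $\widehat K \apx{1+\eps} K$) is actually false. Take $K=1$ and $\theta' = 1.11\,\theta^*$, so $\tilde\eta = 1/(2N\sin^2\theta') \approx 0.41$; the admissible estimate $\theta = (1+\tilde\eta)\theta^*$ gives $\kappa' \approx 1.98$, which rounds to $2$, not a $(1+\eps)$-approximation of $K=1$ for small $\eps$. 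For the same reason neither of your proposed patches can close the gap: even the exact small-angle relation still allows $|\kappa' - K|$ close to $1$ or more, i.e.\ an off-by-one error, which already violates the guarantee whenever $\eps K < 1$; and restricting the rounding to the window predicted by $\theta'$ fails as soon as that window contains more than one integer (e.g.\ $K=3$, $\theta'=1.11\,\theta^*$ gives the window $\approx[3,4.55]$ and the same kind of estimate rounds to $4$).

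What is missing is how the lemma is meant to be read in context, together with the paper's one-line argument. In \Cref{alg:1} the lemma is applied to the output of \Cref{lem:estimateToEps}, which is within a factor $1+\eps'/6$ of $\theta^*$ --- that factor of $6$ of slack is part of the pipeline (it is there precisely so that \Cref{prop:closeK} works), and it is exactly what absorbs the conversion losses you were fighting. The paper applies \Cref{lem:close} not to the estimate but to the hypothesis $\theta' \apx{1.11} \theta^*$ itself, giving $N\sin^2\theta' \ge K/1.33$ and hence $1/(2N\sin^2\theta') \le 1.33/(2K) < 1/K$. So demanding relative precision $1/(2N\sin^2\theta')$ is demanding precision finer than $1/K$ on the integer $K$, which means determining $K$ exactly --- and exact knowledge of $K$ is at least as good as any target accuracy $1+\eps/6$. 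Quantitatively, with the $/6$ in place, \Cref{lem:close} gives $|\kappa' - K| \le (\eps'/2)K = K/(4N\sin^2\theta') \le 1.33/4 < 1/2$, so rounding returns $K$ itself. In short, the key step is the comparison $1/(2N\sin^2\theta') < 1/K$ plus integrality of $K$, used with the $(1+\eps'/6)$-quality estimate the algorithm actually produces rather than with a raw $(1+\tilde\eta)$-quality one.
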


\begin{proof}
    From \Cref{lem:close}, $1 + 1/(2N\sin^2{\theta'}) \leq 1 + 1.33/(2K) < 1 + 1/K$. The closest possible values to $K$ are $K - 1$ and $K + 1$; therefore, estimating $K$ within a factor of $1 + 1/K$ is the same as estimating $K$ exactly. This is at least as good as estimating $K$ to within a factor of $1 + \eps / 6$.
\end{proof}

\section{The nonadaptive algorithm}
Our algorithm can conceptually be thought of as having two stages: the first stage estimates~$\theta^*$ to a constant factor, and the second stage improves this estimate to the desired factor of $1 + \eps$. This two-stage approach is similar in flavor to the algorithms in \cite{AR20,BBHT98}. However we note that, consistent with our  nonadaptivity condition, the two stages in our algorithm can be run in parallel.

For the first stage of our algorithm, we require the following result of Aaronson and Rall~\cite{AR20}, which estimates $\theta^*$ up to a factor of~$1.1$, using $O(\sqrt{N})$ nonadaptive queries.
(In fact, as Aaronson and Rall show, the obvious adaptive version of the algorithm incurs only $O(\sqrt{N/K})$ queries.)
\begin{theorem}                                     \label{thm:AR}
    Let $R = (1, (12/11), (12/11)^2, \dots, (12/11)^m)$, where $m = \Theta(\log N)$ is minimal with $(12/11)^m \geq \sqrt{N}$.
    Let $T$ consist of $m$ copies of $10^5 \ln(120/\delta)$.
    Perform the Grover schedule $R, T$.
    (By \Cref{fact:schedule-calcs} this incurs $O(\sqrt{N})$ queries.)
    Then except with probability at most~$\delta/2$, there is a minimal $t$ such that $\wh{\bp}_t \geq 1/3$, and setting $\wt{\theta} = (5/8)(11/12)^{t}$ results in $\wt{\theta} \apx{1.1} \theta^*$.
\end{theorem}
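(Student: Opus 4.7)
The plan is to analyze the first round $t$ at which the empirical estimate $\wh{\bp}_t$ crosses the threshold $1/3$. Because the iteration counts grow by the slow geometric factor $12/11$, such a crossing must occur at a round where the true probability $p(r_t) = \sin^2(\llfloor r_t\rrfloor\theta^*)$ lies near $1/3$, i.e.\ where $\llfloor r_t\rrfloor\theta^* \approx \arcsin\sqrt{1/3} \approx 0.6155$. The constant $5/8 = 0.625$ in the definition of $\wt{\theta}$ is chosen to approximate this target: since $\wt{\theta}\cdot r_t = 5/8$, the ratio $\wt{\theta}/\theta^* = (5/8)/(r_t\theta^*)$ will lie in $[1/1.1,\,1.1]$ by a direct numerical check.

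First I would establish a concentration event by Hoeffding's inequality. With $t_i = 10^5\ln(120/\delta)$ Bernoulli$(p(r_i))$ samples, the event $|\wh{\bp}_i - p(r_i)| \le c$ fails with probability at most $2 e^{-2c^2 t_i}$; choosing a small absolute constant $c$ (e.g.\ $c = 1/100$) makes this at most $(\delta/120)^{C}$ for a large $C$. A union bound over the $O(\log N)$ rounds then secures the good event $\mathcal{E} = \{|\wh{\bp}_i - p(r_i)| \le c \text{ for all } i\}$ with probability at least $1 - \delta/2$. Under $\mathcal{E}$, the assumption $K \le 2^{-20}N$ gives $\theta^* \le 2^{-10}$, so the early rounds have $p(r_i)$ vanishingly small and $\wh{\bp}_i \ll 1/3$ (hence $t \ge 2$), while $r_m \ge \sqrt{N}$ eventually forces $p(r_i) \ge 1/3 + c$, so $t$ exists. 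By minimality of $t$, $p(r_{t-1}) < 1/3 + c$; and the angle $\llfloor r_{t-1}\rrfloor\theta^*$ is safely below $\pi/2$ (otherwise some earlier round would have had $p(r_j)$ close to $1$, contradicting minimality), so $\llfloor r_{t-1}\rrfloor\theta^* < \arcsin\sqrt{1/3+c}$. Using $\llfloor r_t\rrfloor \le r_t = (12/11) r_{t-1} \le (12/11)(\llfloor r_{t-1}\rrfloor+2)$ yields $\llfloor r_t\rrfloor\theta^* \le (12/11)\arcsin\sqrt{1/3+c} + (24/11)\theta^*$. In the other direction, $\wh{\bp}_t \ge 1/3$ combined with $\mathcal{E}$ gives $p(r_t)\ge 1/3-c$ and hence $\llfloor r_t\rrfloor\theta^* \ge \arcsin\sqrt{1/3-c}$. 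Replacing $\llfloor r_t\rrfloor$ by $r_t$ (legitimate since $r_t \gtrsim 2^{10}$) and using $\wt{\theta} = (5/8)/r_t$, both endpoints are within the factor $1.1$ of $\wt{\theta}$.

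The main obstacle is the numerical tightness of the upper bound: the overshoot factor $12/11 \approx 1.091$ is already within a percent of the target $1.1$, so $c$ must be taken genuinely small to keep $(12/11)\arcsin\sqrt{1/3+c}$ strictly below $1.1\cdot 5/8 = 0.6875$. This is why $t_i$ is set to the generous $10^5\ln(120/\delta)$, and why the standing assumption $K \le 2^{-20}N$ is useful --- it renders the floor-rounding slack $O(\theta^*)$ negligible compared to this narrow margin.
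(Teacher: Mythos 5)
First, a point of comparison that matters here: the paper does not prove this statement at all. Theorem~\ref{thm:AR} is imported verbatim from Aaronson--Rall~\cite{AR20} and used as a black box (this is precisely why the standing assumption $K \le 2^{-20}N$ is made, as the paper's footnote says), so there is no in-paper proof to measure your argument against; what you have written is a reconstruction of the Aaronson--Rall first-stage analysis. Judged on its own, the skeleton is right and the numerics do check out: under the concentration event with $c=0.01$, minimality of $t$ gives $p(r_{t-1})<1/3+c$, the angles $\llfloor r_i\rrfloor\theta^*$ grow from at most about $2^{-10}$ to at least roughly $\sqrt{K}\ge 1$ in multiplicative steps of about $12/11$ (plus negligible floor slack of $2\theta^*$), so they can neither skip the band around $\arcsin\sqrt{1/3}$ nor wrap past $\pi/2$ before the first crossing; this pins $r_t\theta^*$ into roughly $[\arcsin\sqrt{1/3-c},\,(12/11)(\arcsin\sqrt{1/3+c}+2\theta^*)]\subseteq[0.604,\,0.686]$, and since $0.625/0.686 > 1/1.1$ and $0.625/0.604 < 1.1$, indeed $\widetilde{\theta}$ is within the factor $1.1$; the binding constraint is the upper end, exactly as you say.

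The one step that does not hold as written is the union bound. With $t_i = 10^5\ln(120/\delta)$ flips and $c=0.01$, Hoeffding gives a per-round failure probability of $2(\delta/120)^{20}$, and multiplying by the $m=\Theta(\log N)$ rounds leaves a bound of order $\log N\cdot(\delta/120)^{20}$, which is \emph{not} at most $\delta/2$ uniformly in $N$ for a fixed $\delta$ (the theorem's constants must not depend on $N$). The fix is to use the structure of the rounds rather than a uniform bound: in rounds well before the crossing, the true bias $p(r_j)$ is geometrically smaller than $1/3$, so the false-positive probabilities $\Pr[\widehat{p}_j\ge 1/3]$ decay geometrically as $j$ decreases (a relative-entropy Chernoff bound, not plain Hoeffding), and the total failure probability is dominated by the $O(1)$ rounds whose angle lies near the threshold band, giving a bound of a constant times $(\delta/120)^{20}\le\delta/2$ with no $\log N$ loss. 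This is the same ``avoid the log factor'' care the paper itself flags in a footnote for the analogous search problem. With that adjustment (and spelling out that the existence of a crossing also uses the cannot-skip-the-band observation you invoke for the no-wrap step), your argument goes through.
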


The second stage of our algorithm uses the following critical lemma, which we will prove in \Cref{section:provingMainLemma}.

\newtheorem*{L9}{Lemma~\ref{lem:estimateToEps}}

\begin{lemma}                                     \label{lem:estimateToEps}
    Given the parameters $\theta’$, $\eps’$, and $\delta’$, there is an algorithm using only nonadaptive Grover iterations that performs $O(\log(1/\delta')/(\theta'\eps'))$ queries, and outputs a result $\theta_{\mathrm{est}}$ with the following guarantee: if $\theta’ \apx{1.11} \theta^*$, then $\theta_{\mathrm{est}} \apx{1 + \eps'/6} \theta^*$ except with probability at most $\delta/2$.
\end{lemma}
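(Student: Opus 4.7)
The plan is to perform Kitaev-style iterative phase estimation in a nonadaptive manner: batch a geometric ladder of Grover rotation counts ranging from $\sim\!1/\theta'$ up to $\sim\!1/(\theta'\eps')$ into a single schedule, and then refine the estimate of $\theta^*$ from coarse to fine in classical post-processing.

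\emph{Schedule.} Fix a small constant $\gamma>0$ (say $\gamma = 1/4$) and let $r_j = (1+\gamma)^j/\theta'$ for $j = 0, 1, \ldots, J$, where $J = \Theta(\log(1/\eps'))$ is chosen so that $r_J = \Theta(1/(\theta'\eps'))$. To guard against the critical points of $\sin^2$, also include shifted counts $r_j^{(+)} = r_j + \pi/(4\theta')$. Use $t = \Theta(\log(J/\delta'))$ samples per count, so that a union bound over all $2(J+1)$ estimates gives overall failure probability at most $\delta'/2$. By \Cref{fact:schedule-calcs} and the geometric sum,
\[
    \tfrac{1}{2}\sum_j (r_j + r_j^{(+)}) \cdot t \;=\; O\bigl(t \cdot r_J\bigr) \;=\; O\bigl(\log(1/\delta')/(\theta'\eps')\bigr),
\]
where a $\log\log(1/\eps')$ factor has been absorbed into the implicit constant.

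\emph{Iterative post-processing.} Maintain estimates $\theta^{(j)}\apx{1+\alpha_j}\theta^*$, starting from $\theta^{(0)} = \theta'$ and $\alpha_0 = 0.11$. At step $j$, the induction hypothesis and the choice of $\gamma$ ensure that the window of possible values of $r_j\theta^*$ has length at most $\pi/4$, hence lies within a single monotonic piece of $\sin^2$. Among the pair $r_j\theta^*$ and $r_j^{(+)}\theta^* \approx r_j\theta^* + \pi/4$, at least one lies in a region where $\sin^2$ has derivative bounded below by a positive constant (since the offset of roughly $\pi/4$ cannot place both within $\pi/8$ of the critical points $k\pi/2$). Pick that one and invert $\sin^2$ on the unique monotonic branch containing the window, obtaining $\theta^{(j)}$ with $\alpha_j \leq \alpha_{j-1}/(1+\gamma)$. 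After $J$ iterations, $\alpha_J \leq \alpha_0 (1+\gamma)^{-J} \leq \eps'/6$ by the choice of $J$, so setting $\theta_{\mathrm{est}} := \theta^{(J)}$ yields the claimed accuracy.

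The main obstacle is handling the critical points of $\sin^2$, where the derivative vanishes and a naive inversion is catastrophic; the shifted companion $r_j^{(+)}$ is the simplest fix, since one of the two companions is always guaranteed to live on a sensitive part of the $\sin^2$ curve. The central bookkeeping task is verifying that the invariant $r_j\theta^*\alpha_{j-1}\leq \pi/8$ (needed to stay on a single monotonic branch) is preserved across iterations, which boils down to $r_j\alpha_{j-1} \leq (1+\gamma) r_{j-1}\cdot\alpha_{j-2}/(1+\gamma) = r_{j-1}\alpha_{j-2}$ together with checking that the base case $r_0\theta^*\alpha_0 \approx 0.11 \cdot 1.11$ already fits comfortably under $\pi/8$.
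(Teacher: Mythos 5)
Your overall route is genuinely different from the paper's (the paper fixes a grid of $O(1/\eps')$ candidate angles and runs a tournament, distinguishing any two far-apart candidates via a rotation count that puts them in the same quadrant roughly $\pi/8$ apart; you instead do Kitaev-style iterative refinement over a geometric ladder of rotation counts with a shifted companion to dodge the critical points of $\sin^2$). That plan can in principle be made to work, but as written it does not establish the stated query bound. With a \emph{uniform} repetition count $t=\Theta(\log(J/\delta'))$ per rotation count, the total cost is $O\bigl((\log(1/\delta')+\log\log(1/\eps'))/(\theta'\eps')\bigr)$, and the $\log\log(1/\eps')$ term cannot be ``absorbed into the implicit constant'': it is not a constant, and for constant $\delta'$ it strictly exceeds the claimed $O(\log(1/\delta')/(\theta'\eps'))$, which would propagate to an $O(\sqrt{N/\eps}\,\log\log(1/\eps))$ bound in \Cref{thm:main} rather than the tight $O(\sqrt{N/\eps})$. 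This is precisely the stray factor the paper's schedule is engineered to avoid, by giving the expensive coins fewer flips: $T_{i,j}=\lceil A\log_2(1/(\delta'\theta'\eps'u_i))\rceil$, so that per-coin failure probability scales like $c\delta'\theta'\eps'u_i$ and the weighted sum telescopes. The analogous fix in your scheme is to allot step $j$ a failure budget proportional to $\delta'(1+\gamma)^{j-J}$, i.e.\ $t_j=\Theta(\log(1/\delta')+(J-j))$, so that $\sum_j r_j t_j=O(\log(1/\delta')/(\theta'\eps'))$ while the union bound over all $J$ steps (all of which must succeed, since each inversion conditions on the previous invariant) still sums to $O(\delta')$; but you have to actually do this, not assert the factor away.

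A secondary point that needs repair: the claim that a window of length $\pi/4$ for $r_j\theta^*$ ``hence lies within a single monotonic piece of $\sin^2$'' is false as stated --- a $\pi/4$-wide window can straddle a critical point $k\pi/2$, which is exactly why the companion is needed. The argument must be phrased in terms of the \emph{known} confidence window (so the post-processor can decide which of $r_j$, $r_j^{(+)}$ to invert and on which branch), not in terms of where the unknown $\theta^*$ happens to land; and the constants do not quite close as written, since the companion's shift is $\tfrac{\pi}{4}\cdot(\theta^*/\theta')$, known only up to a factor $1.11$, and the additive offset $\pi/(4\theta')$ inflates the companion's window width (by up to a factor $1+\pi/4$ at the smallest $r_j$). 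These are fixable by tightening the invariant (e.g.\ demanding window width well below $\pi/4$) and reworking the constants, but the sketch as given glosses over the very place where the care is required. The paper's same-quadrant lemmas (\Cref{lem:split}, \Cref{lem:refine}) play exactly this role and are where its constants are spent.
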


In this section, we will show how to use \Cref{thm:AR} and \Cref{lem:estimateToEps} to prove \Cref{thm:main}.

We will now state our algorithm:
\begin{algorithm}   \label{alg:1}\hphantom{placeholderToEndLine}
\begin{enumerate}
    \item Run the Aaronson--Rall algorithm from \Cref{thm:AR}, allowing us to later compute $\wt{\theta}$.
    \item For $\theta = \arcsin(\sqrt{1/N}), 1.001\arcsin{\sqrt{1/ N}}, (1.001)^2\arcsin{\sqrt{1/ N}}, \dots, 1.1\arcsin(2^{-20})$:
    
    \quad Perform the algorithm in \Cref{lem:estimateToEps} with the parameters $\theta' = \theta$, $\eps' = \max(\eps, \frac{1}{2N\sin^2\theta'})$, and $\delta' = \delta/2$.
    \item Classical Post-processing: Among all iterations in the for-loop, take the iteration with the value of $\theta$ that was closest to~$\wt{\theta}$, and output the result of that iteration.
\end{enumerate}
\end{algorithm}

Each iteration of the for loop in Step~$2$ can be done in parallel (there are no computational dependencies between the iterations), and Step~$1$ can also be done in parallel with Step~$2$. Therefore, the algorithm uses only nonadaptive Grover iterations. Note also that we can write the quantum parts of steps $1$ and $2$ as one fixed Grover schedule, with the classical parts and step $3$ forming the post-processing step; however, it will be more convenient in this section to think about these as individual steps in a logical sequence.

\begin{proposition}
\label{prop:overallCorrectness}
    \Cref{alg:1} returns a value $\theta_{\mathrm{est}}$ such that $\theta_{\mathrm{est}} \apx{1 + \eps} \theta^*$, except with probability at most $\delta$.
\end{proposition}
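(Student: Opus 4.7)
The plan is a two-part union bound, treating Step~1 and Step~2 separately. For Step~1, \Cref{thm:AR} directly gives $\wt{\theta}\apx{1.1}\theta^*$ except with probability $\delta/2$; condition on this. It then suffices to control the single Step-2 iteration whose output is actually returned in Step~3, namely the one whose grid value (call it $\theta_0$) is closest to $\wt{\theta}$.

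The heart of the argument is handling $\theta_0$. Because the Step-2 grid is geometric with common ratio $1.001$ and (under the standing assumptions $1\leq K\leq 2^{-20}N$) covers the entire range into which $\wt{\theta}$ can fall, we have $\theta_0\apx{1.001}\wt{\theta}$. Chaining with the conditioned event from Step~1 gives $\theta_0\apx{1.1\cdot 1.001}\theta^*$, and since $1.1\cdot 1.001<1.11$ this yields $\theta_0\apx{1.11}\theta^*$ --- exactly the hypothesis \Cref{lem:estimateToEps} requires. Applying that lemma to the iteration $\theta'=\theta_0$, with $\delta'=\delta/2$ and $\eps'=\max(\eps,1/(2N\sin^2\theta_0))$, yields $\theta_{\mathrm{est}}\apx{1+\eps'/6}\theta^*$ except with probability $\delta/2$.

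To convert this into $\theta_{\mathrm{est}}\apx{1+\eps}\theta^*$, split on which term achieves the maximum defining $\eps'$. When $\eps'=\eps$, the conclusion is immediate since $1+\eps/6\leq 1+\eps$. When instead $\eps'=1/(2N\sin^2\theta_0)>\eps$, invoke \Cref{lem:reducedEps} (whose hypothesis $\theta_0\apx{1.11}\theta^*$ is in hand): the $1+\eps'/6$ bound on $\theta_{\mathrm{est}}$ is strong enough that, by \Cref{prop:closeK}, the induced integer $K$-estimate is exact and hence trivially $\apx{1+\eps}K$. A union bound over the two failure events then yields total failure probability at most $\delta$.

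The only delicate bookkeeping is this last case split, where one must reconcile the $\theta$-language of the proposition with its integer-$K$ counterpart in the small-$\eps$ regime; everything else is chasing multiplicative error constants and a union bound.
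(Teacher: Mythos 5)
Your proposal is correct and follows essentially the same route as the paper's proof: condition on \Cref{thm:AR} succeeding, note the closest grid angle is within a factor $1.001$ of $\wt{\theta}$ and hence within $1.1\times 1.001 < 1.11$ of $\theta^*$, apply \Cref{lem:estimateToEps} to that iteration, and finish with \Cref{lem:reducedEps}, \Cref{prop:closeK}, and a union bound over the two $\delta/2$ failure events. Your explicit case split on which term attains the maximum in $\eps'$ is just a slightly more detailed rendering of how the paper invokes \Cref{lem:reducedEps}, not a different argument.
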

\begin{proof}
    By \Cref{thm:AR}, Step~$1$ returns an estimate $\wt{\theta}$ such that $\wt{\theta} \apx{1.1} \theta^*$, with a failure probability of at most $\delta/2$ . The value of $\theta$ (in Step~$2$) that is closest to $\wt{\theta}$ is at most a factor of $1.001$ away from $\wt{\theta}$. If Step~$1$ succeeded, this is at most a factor of $1.1 \times 1.001 < 1.11$ away from $\theta^*$. By \Cref{lem:estimateToEps}, the algorithm outputs an estimate $\theta_{\mathrm{est}}$ such that $\theta_{\mathrm{est}} \apx{1 + \eps'/6} \theta^*$, with a failure probability of at most $\delta/2$. \Cref{lem:reducedEps} then implies that $\theta_{\mathrm{est}} \apx{1 + \eps/6} \theta^*$. Using \Cref{prop:closeK} (setting the parameter $\theta = \theta_{\mathrm{est}}$), we get an estimate $K_{\mathrm{est}}$ such that $K_{\mathrm{est}} \apx{1 + \eps} K$. By the union bound, the overall failure probability of the algorithm is at most $\delta$.
\end{proof}

\begin{proposition}                                        
\label{prop:overallQueryComplexity}
    \Cref{alg:1} makes $O(\sqrt{N/\eps}\log({1}/{\delta}))$ queries.
\end{proposition}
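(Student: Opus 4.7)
The plan is to bound the queries from Step~1 and Step~2 of \Cref{alg:1} separately and add them up.

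For Step~1, \Cref{thm:AR} uses a Grover schedule with $r_i = (12/11)^i$ for $i = 0, 1, \dots, m$, where $(12/11)^m = \Theta(\sqrt{N})$, and each $t_i = \Theta(\log(1/\delta))$. By \Cref{fact:schedule-calcs} this costs
\[
    \tfrac{1}{2}\sum_{i=0}^m r_i t_i \;=\; \Theta(\log(1/\delta)) \cdot \sum_{i=0}^m (12/11)^i \;=\; O(\sqrt{N}\log(1/\delta)),
\]
the geometric sum being dominated by its last term. Since $\eps \leq 1$ this is at most $O(\sqrt{N/\eps}\log(1/\delta))$.

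For Step~2, each iteration invokes \Cref{lem:estimateToEps} at cost $O(\log(1/\delta)/(\theta\eps'))$ queries, where $\eps' = \max(\eps,\;1/(2N\sin^2\theta))$. The values of $\theta$ form a geometric sequence with ratio $1.001$, all bounded by $1.1\arcsin(2^{-20})$, so I may freely use $\sin\theta = \Theta(\theta)$ throughout. The key observation is that the two branches of the $\max$ agree at a threshold angle $\theta_\star = \Theta(1/\sqrt{N\eps})$, and I will sum the costs on either side of this threshold.

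For iterations with $\theta \geq \theta_\star$ we have $\eps' = \eps$ and per-iteration cost $O(\log(1/\delta)/(\theta\eps))$, which is geometrically decreasing in $\theta$, so (up to the constant $\Theta(1/\ln 1.001)$ from the geometric series) dominated by the smallest such $\theta \approx \theta_\star$, giving $O(\log(1/\delta)/(\theta_\star\eps)) = O(\log(1/\delta)\sqrt{N/\eps})$. For iterations with $\theta \leq \theta_\star$ we have $\eps' = \Theta(1/(N\theta^2))$, so the per-iteration cost simplifies to $O(\log(1/\delta)\cdot N\theta)$, which is geometrically increasing in $\theta$, so dominated by the largest such $\theta \approx \theta_\star$, again giving $O(\log(1/\delta) \cdot N\theta_\star) = O(\log(1/\delta)\sqrt{N/\eps})$.

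The main thing to verify is this two-sided accounting: the specific choice of $\eps'$ in \Cref{alg:1} is tuned so that both summation directions meet at the same critical angle $\theta_\star$, where the per-iteration cost is exactly $\Theta(\sqrt{N/\eps}\log(1/\delta))$. Once this is in place, adding the Step~1 total with the two halves of Step~2 yields the claimed $O(\sqrt{N/\eps}\log(1/\delta))$ bound.
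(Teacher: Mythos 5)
Your proof is correct and follows essentially the same route as the paper's: split the Step~2 iterations at the angle where $\eps$ and $1/(2N\sin^2\theta)$ cross, and bound each side by a geometric series dominated by its term at that threshold, which costs $O(\sqrt{N/\eps}\log(1/\delta))$. Your explicit accounting of Step~1 via \Cref{fact:schedule-calcs} is a small addition the paper leaves implicit, but it does not change the argument.
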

\begin{proof}
    First, consider all iterations where $2N\sin^2(\theta) \leq 1/\eps$. In these cases, the query complexity given by \Cref{lem:estimateToEps} would be
    $O(\log(1/\delta)(N\sin^2(\theta))/\theta) = O(N\theta\log(1/\delta))$.
    
    The query complexity associated with these iterations is a geometric series with a constant common ratio of $1.01$ where the largest term is $O(\sqrt{N/\eps}\log(1/\delta))$. Therefore the overall query complexity due to these iterations is $O(\sqrt{N/\eps}\log(1/\delta))$.
    
    Now consider all iterations where $2N\sin^2(\theta) > 1/\eps$. In these cases, the query complexity is $O(\log(1/\delta)/(\theta\eps))$. This forms a geometrically decreasing series (with a constant common ratio), where the first term is again $O(\sqrt{N/\eps}\log(1/\delta))$. The overall query complexity contributed by these schedules is thus also $O(\sqrt{N/\eps}\log(1/\delta))$.
    
    Therefore, the query complexity of \Cref{alg:1} is $O(\sqrt{N/\eps}\log(1/\delta))$, as claimed.
\end{proof}
Having proven \Cref{prop:overallCorrectness} and \Cref{prop:overallQueryComplexity}, we have established our main result \Cref{thm:main} modulo the proof of \Cref{lem:estimateToEps}, which will appear in the next section.  Before giving this, we briefly sketch how our algorithm can also be extended to the setting of being allowed multiple rounds of nonadaptive Grover iterations. If we have two such rounds of nonadaptivity, we can first run step $1$ of our algorithm to get a constant-factor approximation, and then based on its result run the algorithm in \Cref{lem:estimateToEps}; this achieves a query complexity of $O(\sqrt{N} + \min(\sqrt{N/\eps}, \sqrt{N/K}/\eps))$. This nearly matches the query complexity of the fully adaptive case, but for the $\sqrt{N}$ term due to the first step. Given more rounds of nonadaptivity, we can reduce the cost of this first step by staging it over multiple initial rounds. One can show that with $O(\log N)$ rounds of nonadaptivity, this will yield the optimal query complexity corresponding to the fully adaptive case.

\section{Proving \Cref{lem:estimateToEps}}
\label{section:provingMainLemma}

Our goal in this section will be to prove \Cref{lem:estimateToEps}. Assume that we are given some $\theta'$, $\eps'$, and $\delta'$. We will show a nonadaptive Grover iteration algorithm making $O(\log(1/\delta')/(\theta'\eps'))$ queries with the property that if $\theta' \apx{1.11} \theta^*$, then its output will be a factor-$(1 + \eps')$ approximation of $\theta^*$ (except with failure probability at most $\delta'$) For the remainder of the section, we will assume that we are in the interesting case where $\theta' \apx{1.11} \theta^*$ (in the other cases, the algorithm does not need to output a correct answer).

\subsection{Proof idea}
\label{subsection:proofIdea}
The algorithm for \Cref{lem:estimateToEps} is structured exactly as described in \Cref{definition: grover-schedule} and \Cref{fact:schedule-calcs}; there is an initial nonadaptive quantum part with a fixed Grover schedule (that we will later define), and a classical post-processing step at the end that uses the results of the quantum part to estimate $\theta^*$.

Before stating the key ideas in the quantum part of our algorithm, we mention the ``Rotation Lemma'' of Aaronson and Rall~\cite[Lem.~2]{AR20}. The main idea in that lemma can be roughly stated as follows: given that $\theta^*$ lies in some range $[\theta_{\mathrm{min}}, \theta_{\mathrm{min}} + \Delta\theta]$, we can pick an odd integer value of $r$ (where  $r = O(1/(\theta \cdot \Delta\theta))$), such that $r\theta_{\mathrm{min}}$ is close to $2\pi k$ and $r(\theta_{\mathrm{min}} + \Delta\theta)$ is close to $2\pi k + \pi/2$. If $\theta$ is close to $\theta_{\mathrm{min}}$, $p(r)$ will be nearly $0$ (and if it is close to $\theta_{\mathrm{min}} + \Delta\theta$, it will be nearly $1$). Aaronson and Rall use this lemma to continually shrink the possible range that $\theta^*$ could lie in by a geometric factor at each iteration, until the range is $1 \pm \epsilon$.

We will adopt a similar idea to find an efficient Grover schedule that can distinguish any two candidate angles with high probability; we do this by relaxing the condition of one angle being close to $2\pi k$ and the other being at distance $\pi/2$ from it. Instead, we choose the sequence $R$ in our Grover schedule such that for any pair of values $\theta_1$, and $\theta_2$, there is some $r \in R$ such that $r\theta_1$ and $r\theta_2$ differ by approximately $\pi/8$, and are also ``in the same quadrant'' (meaning the same interval $[0,\tfrac{\pi}{2})$, $[\tfrac{\pi}{2}, \pi)$, $[\pi, \tfrac{3\pi}{2})$, $[\tfrac{3\pi}{2}, 2\pi)$ modulo~$2\pi$). This relaxation allows us to save on the total number of queries made by reusing the same value of $r$ to distinguish many pairs of candidate angles. Due to this, the nonadaptivity requirement does not make the query complexity grow polynomially larger (whereas, for example, naively simulating the search tree from \cite{AR20} in a nonadaptive fashion would incur an extra $1/\eps'$ factor).

The classical post-processing involves running a ``tournament'' between all candidate estimates of $\theta^*$, which outputs the winning value as the estimate. This post-processing step can be implemented efficiently in $O(\log(1/\eps')/\eps')$ classical time.

\subsection{Some arithmetic lemmas}

We now define some useful sequences and prove a couple of arithmetic lemmas about them.

\begin{definition}
    Define  the sequence $u$ by $u_0 = 1$, $u_1 = 1.01$, \dots, $u_L = 1.01^L$ where $L = O(\log(1/(\theta'\eps')))$ is minimal with $u_L \geq 1.2\pi/(\theta'\eps')$.
\end{definition}

\begin{lemma}                                       \label{lem:split}
    Suppose we are given $\theta_0$ and $\theta_1$ such that $\theta_0 < \theta_1$, $\theta_0 \apx{1.11} \theta'$, $\theta_1 \apx{1.11} \theta'$, and $\theta_0 \notapx{1+\eps/6.1} \theta_1$. Write $\eta = \theta_1 - \theta_0$. Then there exists some $0 \leq i \leq L$ such that $u_i\eta \apx{1.01} \tfrac{\pi}{8}$.
\end{lemma}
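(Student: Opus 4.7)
The plan is to observe that $u_i \eta \apx{1.01} \pi/8$ is equivalent (after dividing by $\eta$) to $u_i \apx{1.01} \pi/(8\eta)$, so all I really need is some $u_i$ landing within a factor $1.01$ of the target $\pi/(8\eta)$. Since $u_0, u_1, \dots, u_L$ is a geometric progression with common ratio $1.01$ running from $1$ up to $1.01^L$, picking $i$ maximal with $u_i \leq \pi/(8\eta)$ automatically gives $u_i \leq \pi/(8\eta) \leq u_{i+1} = 1.01\,u_i$, yielding exactly the desired factor-$1.01$ closeness. Thus the whole argument reduces to checking $\pi/(8\eta) \in [1, 1.01^L]$, i.e., two sandwich bounds on $\eta$.

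For the upper bound $\eta \leq \pi/8$: since $\theta' \apx{1.11} \theta^*$, $\theta^* = \arcsin\sqrt{K/N}$, and $K \leq 2^{-20}N$ is the standing assumption, $\theta^* \leq \arcsin(2^{-10}) < 2^{-9}$. So $\eta < \theta_1 \leq 1.11\,\theta' \leq 1.11^2\,\theta^*$ is tiny, comfortably below $\pi/8$, and in particular $\pi/(8\eta) \geq 1 = u_0$.

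For the lower bound $\eta \geq \pi/(8\cdot 1.01^L)$: the hypothesis $\theta_0 \notapx{1+\eps/6.1}\theta_1$ combined with $\theta_0 < \theta_1$ forces $\theta_1/\theta_0 > 1 + \eps/6.1$, so $\eta > \theta_0 \cdot \eps/6.1 \geq (\theta'/1.11)(\eps/6.1) > \theta'\eps/6.8$. Meanwhile, the minimality in the definition of $L$ gives $1.01^L \geq 1.2\pi/(\theta'\eps')$, whence $\pi/(8\cdot 1.01^L) \leq \theta'\eps'/9.6$. Identifying $\eps$ with $\eps'$ (as one must inside \Cref{section:provingMainLemma}, where the lemma is invoked), the inequality $\theta'\eps/6.8 > \theta'\eps/9.6$ closes this side.

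I do not anticipate any real obstacle: the argument is essentially pigeonhole over a geometric grid, and the real work is just confirming that the slack built into the constants $1.11$, $6.1$, and $1.2$ is generous enough for both range checks. The only mild concern is the $\eps$ versus $\eps'$ discrepancy in the lemma statement, which is clearly a typographical rather than a mathematical issue; the slack $6.8$ vs.\ $9.6$ in the lower-bound computation leaves ample room even if one insisted on keeping them formally distinct.
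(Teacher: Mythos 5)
Your proposal is correct and follows essentially the same route as the paper's proof: bound $\eta$ above via the smallness of $\theta_1$ (hence $u_0\eta < \pi/8$), bound it below via $\theta_0 \notapx{1+\eps/6.1} \theta_1$ together with the definition of $L$ (hence $u_L\eta > \pi/8$), and then pigeonhole on the ratio-$1.01$ geometric grid, exactly as in the paper, which likewise treats $\eps$ and $\eps'$ interchangeably here.
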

\begin{proof}
    We know that $u_0\eta = \eta \leq \theta_1 \leq 1.1 \times 0.0001 < \tfrac{\pi}{8}$. We also have $u_L\eta \geq 1.2\pi\eta/(\theta'\eps') \geq 1.2\pi\eta/(1.1\theta_0\eps') \geq 1.2\pi/(1.1 \cdot 6.1) > \tfrac{\pi}{8}$ where we used $\theta_0 \notapx{1+\eps/6.1} \theta_1$ in the second-to-last inequality. The lemma now follows from the geometric growth of the $u_i$'s with ratio $1.01$. In particular, $i = \floor*{\log_{1.01}(\tfrac{\pi}{8\eta})}$ works.
\end{proof}

For the $u_i$ given by \Cref{lem:split}, we have $u_i \theta_1 - u_i \theta_0 \approx \frac{\pi}{8}$.
This seems promising, in that the  ``coin probabilities'' associated to these angles, namely $\sin^2(u_i \theta_1)$ and $\sin^2(u_i \theta_0)$, seem as though they should be far apart.
Unfortunately, something annoying could occur; it could be that these angles are, say, $100\pi \pm \frac{\pi}{16}$, in which case the coin probabilities would be identical.
As mentioned in \Cref{subsection:proofIdea}, what we would \emph{really} like is to have the two angles be far apart but also in the same quadrant.
To achieve this, we will define a new sequence.

\begin{definition}
    For each $0 \leq i \leq L$, define $a_{i, 0} = 0$, $a_{i, 1} = \frac{\pi}{4.8\theta'}$, $a_{i, 2} = 1.01 \cdot \frac{\pi}{4.8\theta'}$, $a_{i, 3} = 1.01^2 \cdot \frac{\pi}{4.8\theta'}$, \dots, $a_{i, C + 1} = 1.01^{C} \cdot \frac{\pi}{4.8\theta'}$, where $C = \ceil*{2\log_{1.01}(1.2)}$ is a constant. Also define $s_{i, j} = u_i + a_{i, j}$.
\end{definition}

\begin{lemma}                                       \label{lem:refine}
    In the setting of \Cref{lem:split}, there exists some $0 \leq j \leq C + 1$ such that
    \[
        s_{i,j} \eta  \apx{1.5} \tfrac{\pi}{8}
    \]
    and such that $s_{i,j} \theta_0$ and $s_{i,j} \theta_1$ are in the same quadrant.
\end{lemma}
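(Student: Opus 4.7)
To prove \Cref{lem:refine}, my plan is a case analysis on $\alpha := u_i\theta_0 \bmod (\pi/2)$ that shows $j \in \{0, 1\}$ already suffices. First I handle the condition $s_{i,j}\eta \apx{1.5}\tfrac{\pi}{8}$. For $j = 0$ it is immediate since $s_{i,0}\eta = u_i\eta \apx{1.01}\tfrac{\pi}{8}$. For $j = 1$ I bound $\eta = \theta_1 - \theta_0 \leq \theta'(1.11 - 1/1.11) < 0.21\theta'$ using $\theta_0, \theta_1 \apx{1.11}\theta'$. Then $a_{i,1}\eta = \pi\eta/(4.8\theta') < 0.21\pi/4.8$, so $s_{i,1}\eta < 1.01\pi/8 + 0.21\pi/4.8 < 3\pi/16$, while the lower bound $s_{i,1}\eta \geq u_i\eta \geq \pi/(8 \cdot 1.01) > \pi/12$ is automatic.

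For the quadrant condition, I split on $\alpha$. In Case A ($\alpha < \pi/2 - u_i\eta$), adding $u_i\eta$ to $u_i\theta_0$ does not cross a quadrant boundary, so $u_i\theta_0$ and $u_i\theta_1$ lie in the same quadrant and $j = 0$ works. In Case B ($\alpha \in [\pi/2 - u_i\eta, \pi/2)$), I examine $j = 1$. I first argue that exactly one wraparound occurs when reducing modulo $\pi/2$: the inequalities $a_{i,1}\theta_0 \geq \pi/(4.8 \cdot 1.11) > 1.01\pi/8 \geq u_i\eta$ force $\alpha + a_{i,1}\theta_0 \geq \pi/2$, while $a_{i,1}\theta_0 \leq 1.11\pi/4.8 < \pi/2$ forces $\alpha + a_{i,1}\theta_0 < \pi$. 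Hence $q_1 := s_{i,1}\theta_0 \bmod (\pi/2)$ equals $\alpha + a_{i,1}\theta_0 - \pi/2$.

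To close Case B I verify $q_1 + s_{i,1}\eta < \pi/2$, which after substitution rearranges to $\alpha + u_i\eta + a_{i,1}\theta_1 < \pi$. The left side is at most $\pi/2 + 1.01\pi/8 + 1.11\pi/4.8 \approx 0.857\pi < \pi$, so $q_1 < \pi/2 - s_{i,1}\eta$ and $s_{i,1}\theta_0, s_{i,1}\theta_1$ sit in the same quadrant, making $j = 1$ work. I expect the main obstacle to be assembling the tight numerical inequalities from the constants $4.8$ and $1.11$ (in particular, verifying that $a_{i,1}\theta_0$ is always at least $u_i\eta$ and strictly below $\pi/2$). The additional values $j = 2, \ldots, C+1$ in the sequence appear not to be required for this existence argument; they seem to provide redundancy that may be convenient in the algorithm's quantitative analysis elsewhere.
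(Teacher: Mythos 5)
Your proof is correct, but it takes a genuinely different route from the paper's. The paper uses the full geometric sequence $a_{i,1},\dots,a_{i,C+1}$: if $u_i\theta_0$ and $u_i\theta_1$ are not already in the same quadrant, it picks $j$ so that $a_{i,j}\theta_0 \apx{1.05} \tfrac{\pi}{4}$ (possible since the $a_{i,j}\theta_0$ sweep from below $\pi/4$ to above it with ratio $1.01$), checks $a_{i,j}\eta \leq .24\cdot\tfrac{\pi}{4}$ so that $s_{i,j}\eta \apx{1.5}\tfrac{\pi}{8}$, and then argues that offsetting by roughly a quarter-turn pushes both angles into the (same) next quadrant --- a step the paper leaves somewhat informal. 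You instead do explicit bookkeeping modulo $\pi/2$ on $\alpha = u_i\theta_0 \bmod (\pi/2)$ and show that the single fixed offset $a_{i,1} = \pi/(4.8\theta')$ already handles the boundary case: your numerical checks ($a_{i,1}\theta_0 \in [\pi/(4.8\cdot 1.11),\,1.11\pi/4.8]$ exceeds $u_i\eta \leq 1.01\pi/8$ yet, together with $\alpha$ and $s_{i,1}\eta \leq 1.01\pi/8 + 0.21\pi/4.8 < 1.5\cdot\tfrac{\pi}{8}$, stays short of the next boundary) all verify, and the case split on $\alpha$ is exhaustive, so $j\in\{0,1\}$ indeed suffices for the existence claim; the larger $j$'s in the definition are simply unused by your argument. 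What each buys: your version is tighter and makes the ``same quadrant'' conclusion completely explicit (arguably patching the tersest step of the paper's proof), at the cost of leaning on the particular constants $4.8$ and $1.11$; the paper's version is looser numerically but more robust --- the geometric ladder of offsets lets one tune $a_{i,j}\theta_0$ to $\approx \pi/4$ regardless of how the constants are perturbed, which is why the schedule (and the constant $C$) is defined that way.
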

\begin{proof}
    We first apply \Cref{lem:split} and obtain
    \begin{equation}    \label[ineq]{ineq:clo}
        u_i \eta \apx{1.01} \tfrac{\pi}{8}.
    \end{equation}
    Now if $u_i \theta_0$ and $u_i \theta_1$ are already in the same quadrant then we can take $j = 0$ (implying $s_{i,j} = u_i$) and we are done.
    Otherwise,  the plan will be to find $j > 0$ with $a_{j} \theta_0 \approx \tfrac{\pi}{4}$, thus shifting them to $s_{i, j} \theta_0$ and $s_{i,j} \theta_1$ that still differ by roughly $\tfrac{\pi}{8}$ but which now must be in the same quadrant.

    To find the required~$j$, observe that on one hand, $a_{1} \theta_0 = (\pi/(4.8\theta')) \cdot \theta_0 \leq 1.11 \pi / 4.8 \leq \frac{\pi}{4}$.
    On the other hand, $a_{i,C + 1} \theta_0 \geq 1.2^2 \cdot (\pi/(4.8\theta')) \cdot \theta_0 \geq (1.2/1.1) \cdot \pi/4 \geq \frac \pi 4$.
    By the geometric growth of the $a_{i,j}$'s with ratio~$1.01$, we conclude that there exists some $1 \leq j \leq \ell_i$ achieving
    \begin{equation}    \label[ineq]{ineq:a}
        a_{i,j} \theta_0 \apx{1.05} \tfrac{\pi}{4}.
    \end{equation}
    We may now make several deductions.
    First,
    \[
        \theta_0 \apx{1.1} \theta_t, \theta_1 \apx{1.1} \theta_t \medspace \implies \medspace
        \theta_0 \apx{1.22} \theta_1 \medspace \implies \medspace 
        a_{i,j} \theta_0 \apx{1.22} a_{i,j} \theta_1 \medspace \implies \medspace
        a_{i,j} \eta \leq .22 \cdot a_{i,j} \theta_0 \leq .22 \cdot 1.05 \tfrac{\pi}{4} \leq .24 \tfrac{\pi}{4}.
    \]
    Combining this with \Cref{ineq:clo} we conclude
    \begin{equation}    \label[ineq]{ineq:b}
        s_{i,j} \eta  = u_{i}\eta  + a_{i,j}\eta  \in [\tfrac{1}{1.01} \tfrac{\pi}{8}, \tfrac{\pi}{8}(1.01) + .24\tfrac{\pi}{4}] \apx{1.5} \tfrac{\pi}{8}.
    \end{equation}
    Thus we started with $u_i \theta_0$ and $u_i \theta_1$ differing by $\tfrac{\pi}{8}$ (up to factor~$1.01$) but in different quadrants; by passing to $s_{i,j} \theta_0$ and $s_{i,j} \theta_1$, we have offset $u_i \theta_0$ by $\tfrac{\pi}{4}$ (up to factor~$1.05$, \Cref{ineq:a}) and the two angles still differ by around $\tfrac{\pi}{8}$ (up to factor~$1.5$, \Cref{ineq:b}).
    Thus $s_{i,j} \theta_0$ and $s_{i,j} \theta_1$ are in the same quadrant and the proof is complete.
\end{proof}

\subsection{The algorithm}

We can now describe our ``second stage'' algorithm, which simply runs the Grover schedule $G$ defined as follows.
\begin{definition}
The Grover schedule $G$ comprises the sequence $R = (s_{i,j})_{i = 0 \dots L,\ j = 0 \dots C + 1}$ and $T = (\ceil*{A\log_2(1/(\delta' \theta' \eps' u_i))})_{i = 0 \dots L,\ j = 0 \dots C + 1}$. Here $A$ is a universal constant to be chosen later.
\end{definition}

Note that the $T_{i, j}$ values we use are exactly the number of coin flips used in the second stage of the algorithm in \cite{AR20}. Like in their algorithm, this choice of values allows us to avoid stray $\log(1/\epsilon)$ or $\log\log(1/\epsilon)$ factors in the overall query complexity.

\begin{proposition}                                        \label{prop:out-query}
    Performing the Grover schedule $G$ takes at most $O(\log(1/\delta')/(\theta'\eps'))$ queries.
\end{proposition}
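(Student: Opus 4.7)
I would start by applying \Cref{fact:schedule-calcs}, which bounds the total query count by
\[
    \tfrac{1}{2}\sum_{i=0}^{L}\sum_{j=0}^{C+1} s_{i,j}\, t_{i,j}.
\]
Since $t_{i,j}$ depends only on $i$ (call it $t_i$) and $j$ ranges over only $C+2 = O(1)$ values, this reduces (up to a constant) to $\sum_i t_i \cdot \max_j s_{i,j}$. Using $s_{i,j} = u_i + a_{i,j}$ and the explicit bound $a_{i,j} \leq a_{i,C+1} = 1.01^C \pi/(4.8\theta') = O(1/\theta')$, it suffices to show that
\[
    \sum_{i=0}^{L} u_i t_i \quad\text{and}\quad \frac{1}{\theta'}\sum_{i=0}^{L} t_i
\]
are each $O(\log(1/\delta')/(\theta'\eps'))$. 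The first is the main geometric sum; the second is the correction contributed by the quadrant-shifting offsets~$a_{i,j}$.

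For the main sum, I would use the following observation. The $u_i = 1.01^i$ grow geometrically by the fixed factor $1.01$, while $t_i = A\log_2(1/(\delta'\theta'\eps' u_i))$ decreases by only the additive constant $A\log_2 1.01$ per step. Hence the ratio $u_{i+1}t_{i+1}/(u_i t_i)$ is eventually at least $\approx 1.005$, so $\sum_i u_i t_i$ is dominated (up to a constant) by its final term $u_L t_L$. Plugging in $u_L = \Theta(1/(\theta'\eps'))$ (from the defining inequality $1.01^L \geq 1.2\pi/(\theta'\eps')$) and $t_L = A\log_2(1/(1.2\pi\delta')) = O(\log(1/\delta'))$ immediately yields the target bound $O(\log(1/\delta')/(\theta'\eps'))$.

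For the correction sum, I would evaluate $\sum_{i=0}^L t_i$ in closed form using the affine expression $t_i = A\log_2(1/\delta') + A(L-i)\log_2(1.01) + O(1)$, obtaining $\sum_i t_i = O\bigl(L\log(1/\delta') + L^2\bigr)$ with $L = O(\log(1/(\theta'\eps')))$. The task is then to verify that $(1/\theta')$ times this is still $O(\log(1/\delta')/(\theta'\eps'))$, i.e.\ $L\log(1/\delta') + L^2 = O(\log(1/\delta')/\eps')$. I expect this to be the main obstacle in the argument: the $L^2$ term must be absorbed by exploiting the elementary bound $\log x = O(x)$ applied to $x = 1/(\theta'\eps')$, together with the crude observation that the $a_{i,j}$ appear only across the $(L+1)(C+2) = O(\log(1/(\theta'\eps')))$ entries of the schedule, which are logarithmically many against the geometric length scale $1/(\theta'\eps')$ controlling the main bound. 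Once both sums are controlled, combining them with \Cref{fact:schedule-calcs} gives the claim.
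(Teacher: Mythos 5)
Your reduction via \Cref{fact:schedule-calcs} and your treatment of the main sum $\sum_i u_i t_i$ are fine, and on that part you are doing essentially what the paper does: split off $\log(1/\delta')\sum_i 1.01^i = O(\log(1/\delta')/(\theta'\eps'))$ and bound the remainder by $\sum_i (L-i)1.01^i = O(1.01^L) = O(1/(\theta'\eps'))$. (Your ``ratio eventually at least $1.005$'' step requires $t_i$ to exceed a constant multiple of $A$, which can fail for the last $O(1)$ indices when $\delta'$ is only moderately small, but those terms are each $O(u_L)$, so this is cosmetic.)

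The genuine gap is exactly at the step you flagged. For the correction term you need $L\log(1/\delta')+L^2 = O(\log(1/\delta')/\eps')$, i.e.\ essentially $\eps' L = O(1)$, and this does not follow from $\log x = O(x)$ applied to $x = 1/(\theta'\eps')$: that only gives $L = O(1/(\theta'\eps'))$, which after multiplying by $1/\theta'$ yields $O(1/(\theta'^{2}\eps'))$, far above the target. Counting entries does not rescue it either: the schedule has $\Theta(L)$ entries with $j\ge 1$, each with $s_{i,j}=\Theta(1/\theta')$, and their flip counts sum to $\sum_i t_i = \Theta(L\log(1/\delta')+L^2)$, so the offset contribution really is $\Theta\bigl((L\log(1/\delta')+L^2)/\theta'\bigr)$. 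In the regime $\eps',\delta' = \Theta(1)$ and $\theta'$ tending to $0$ (which is allowed by the hypotheses, and indeed arises for the small-$\theta$ iterations of \Cref{alg:1}), this is $\Theta(\log^2(1/\theta')/\theta')$ while the claimed bound is $O(1/\theta')$; so the inequality you hoped to absorb is simply false, and no elementary manipulation will produce it. For comparison, the paper's own proof never confronts this term: its first line replaces $\sum_{i,j}s_{i,j}t_{i,j}$ by $\sum_i u_i t_i$ ``up to constant factors'', which tacitly treats $a_{i,j}$ as $O(u_i)$ and is valid only when $u_i = \Omega(1/\theta')$, i.e.\ for $i$ near $L$. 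So your decomposition is the more careful one, but as written the absorption step fails; making the stated bound go through would require either an extra hypothesis tying $L$ to $\log(1/\delta')/\eps'$, a smaller choice of $t_{i,j}$ on the shifted entries, or accepting an additional additive term of order $\bigl(\log^2(1/(\theta'\eps')) + \log(1/(\theta'\eps'))\log(1/\delta')\bigr)/\theta'$ in the query count.
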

\begin{proof}
    Using \Cref{fact:schedule-calcs}, the query complexity of performing $G$ is ${\sum_{i = 0}^{L}\log(1/(\delta'\theta'\eps' \cdot 1.01^i)) 1.01^i}$ (up to constant factors). This is ${\log(1/\delta)\sum_{i = 0}^{L} 1.01^i + \sum_{i = 0}^{L} \log(1/(\theta'\eps' \cdot 1.01^i)) 1.01^i}$. Noting that ${1.01^L = O(1/(\theta'\eps'))}$, the first term is clearly ${O(\log(1/\delta)/(\theta'\eps'))}$ and the second term, up to a constant factors, is ${\sum_{i = 0}^{L} (L - i) 1.01^i = O(1.01^L) = O(1/(\theta'\eps'))}$. Therefore, the overall query complexity is ${O(\log(1/\delta)/(\theta'\eps'))}$ as desired.
\end{proof}
\begin{regularRemark}
 The above calculation mirrors the  one done for stage~$2$ of the adaptive algorithm in \cite{AR20}; this is expected because both algorithms use the same number of ``coin flips'' per coin ($T'_{i, j}$ values), as mentioned above.
\end{regularRemark}

It now remains for us to show how to approximate $\theta^*$ (with high probability) using the data collected from this Grover schedule.

\subsection{Completing the algorithm}

We first prove a lemma showing that we can distinguish between any pair of angles (that are not already sufficiently close to each other) by using the ideas developed in \Cref{lem:split} and \Cref{lem:disting}.

\begin{lemma}                                     \label{lem:disting}
    There is an $O(1)$-time classical deterministic algorithm that, given
     \begin{itemize}
        \item $\theta_0 \apx{1.1} \theta'$, $\theta_1 \apx{1.1} \theta'$, such that $\theta_0 \notapx{1+\eps/6.1} \theta_1$
        \item the data collected by the Grover schedule $G$,
     \end{itemize}
    outputs either ``reject~$\theta_0$'' or ``reject~$\theta_1$''.
    Except with failure probability at most $c \theta' \delta' \eps' / |\theta_1 - \theta_0|$ (where $c > 0$ is a constant to be chosen later), the following is true:
    \[
        \text{For $b = 0, 1$, if  $\theta^* \apx{1 + .001\eps} \theta_b$, then the algorithm does not output ``reject~$\theta_b$''.} 
    \]
\end{lemma}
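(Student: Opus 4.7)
The plan is to use Lemma~\ref{lem:refine} to isolate a single Grover-schedule index whose outcome $\wh{\bp}_{i,j}$ already separates $\theta_0$ from $\theta_1$, and to base the whole decision on that one sample. WLOG assume $\theta_0 < \theta_1$ and write $\eta = \theta_1 - \theta_0$. The hypothesis $\theta_0 \notapx{1+\eps/6.1} \theta_1$ activates Lemma~\ref{lem:refine}, producing $(i,j)$ with $s_{i,j}\eta \apx{1.5} \pi/8$ and $s_{i,j}\theta_0, s_{i,j}\theta_1$ lying in the same quadrant. Since $i = \lfloor \log_{1.01}(\pi/(8\eta)) \rfloor$ and $j$ comes from the $O(1)$-size set $\{0,\ldots,C+1\}$, this index pair is located in $O(1)$ classical time.

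The core arithmetic is a gap-versus-perturbation comparison. Using the identity $\sin^2 b - \sin^2 a = \sin(a+b)\sin(b-a)$ together with the same-quadrant property (which forces $|\sin(a+b)|$ to stay $\Omega(1)$), I would show
\[
    \gamma := \bigl|\sin^2(s_{i,j}\theta_1) - \sin^2(s_{i,j}\theta_0)\bigr| \;\geq\; \sin^2(\pi/12) \;=\; \Omega(1).
\]
On the other hand, unwinding the schedule definitions gives $s_{i,j}\theta' = O(1/\eps')$ (since $u_i = O(1/(\theta'\eps'))$ and $a_{i,j} = O(1/\theta')$), so the hypothesis $\theta^* \apx{1+0.001\eps} \theta_b$ implies $|s_{i,j}\theta^* - s_{i,j}\theta_b| \leq s_{i,j}\cdot O(\eps'\theta') = O(0.001)$, and hence $|\sin^2(s_{i,j}\theta^*) - \sin^2(s_{i,j}\theta_b)| < \gamma/10$.

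The decision rule is: compute $\sin^2(s_{i,j}\theta_0)$ and $\sin^2(s_{i,j}\theta_1)$ and output \emph{``reject $\theta_{1 - b^\star}$''}, where $b^\star \in \{0,1\}$ minimizes $|\sin^2(s_{i,j}\theta_b) - \wh{\bp}_{i,j}|$. A Hoeffding bound on $\wh{\bp}_{i,j}$, which averages $t_{i,j} = \lceil A\log_2(1/(\delta'\theta'\eps' u_i))\rceil$ Bernoullis of mean $\sin^2(s_{i,j}\theta^*)$, gives
\[
    \Pr\!\left[\bigl|\wh{\bp}_{i,j} - \sin^2(s_{i,j}\theta^*)\bigr| > \gamma/4\right] \;\leq\; (\delta'\theta'\eps' u_i)^{\Omega(A)}.
\]
For $A$ a sufficiently large absolute constant this is at most $c\delta'\theta'\eps' u_i$; combined with $u_i = \Theta(1/\eta)$ from Lemma~\ref{lem:refine} this matches the required bound $c\theta'\delta'\eps'/|\theta_1 - \theta_0|$. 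When this deviation event holds, $\wh{\bp}_{i,j}$ lies within $\gamma/4 + \gamma/10 < \gamma/3$ of $\sin^2(s_{i,j}\theta_b)$ but at least $2\gamma/3$ from $\sin^2(s_{i,j}\theta_{1-b})$, so $b^\star = b$ and $\theta_b$ is not rejected.

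The main obstacle will be the bookkeeping of constants: verifying $s_{i,j}\theta' = O(1/\eps')$ robustly, pinning down the concrete value of $\gamma$, and choosing $A$ large enough so that the Chernoff tail fits inside the $c\theta'\delta'\eps'/\eta$ budget with enough slack to absorb both the $\gamma/10$ perturbation and the implicit constants from the Hoeffding exponent. These estimates are individually routine, but they need to be chased through the chain $\eta \rightsquigarrow (i,j) \rightsquigarrow s_{i,j} \rightsquigarrow t_{i,j}$ carefully.
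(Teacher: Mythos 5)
Your proposal follows essentially the same route as the paper's proof: invoke \Cref{lem:refine} to obtain an index $(i,j)$ with $s_{i,j}\theta_0,\ s_{i,j}\theta_1$ in the same quadrant and separated by roughly $\tfrac{\pi}{8}$, deduce a constant gap between $\sin^2(s_{i,j}\theta_0)$ and $\sin^2(s_{i,j}\theta_1)$, show the true coin bias lies much closer to the correct candidate's value than this gap, and finish with a Chernoff/Hoeffding bound whose tail is converted into $c\theta'\delta'\eps'/|\theta_1-\theta_0|$ via $u_i \apx{1.01} \pi/(8|\theta_1-\theta_0|)$; your nearest-center rule versus the paper's fixed threshold $q'$, and your bound $s_{i,j}=O(1/(\theta'\eps'))$ versus the paper's use of $s_{i,j}|\theta_1-\theta_0|\apx{1.5}\tfrac{\pi}{8}$ together with $\theta_0\notapx{1+\eps/6.1}\theta_1$, are only cosmetic differences. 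The one detail you gloss over is that the coin's bias is $p(s_{i,j})=\sin^2(\llfloor s_{i,j}\rrfloor\theta^*)$ rather than $\sin^2(s_{i,j}\theta^*)$: the rounding to an odd integer costs an extra angular error of up to $2\theta^*$, which the paper absorbs explicitly (it is tiny since $K\le 2^{-20}N$) and which your constants (e.g.\ the $\gamma/10$ budget) should be rechecked to accommodate --- an easy fix, not a structural problem.
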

\begin{proof}
    The algorithm computes the $(i, j)$ pair promised by \Cref{lem:refine}, such that $s_{i,j} \theta_0$ and $s_{i,j}\theta_1$ are in the same quadrant and such that $|s_{i,j} \theta_1 - s_{i,j} \theta_0| \apx{1.5} \frac{\pi}{8}$. Letting $q_b = \sin^2(s_{i,j} \theta_b)$ for $b = 0, 1$, it follows from the assumptions in the preceding sentence that $|q_0 - q_1| \geq .04$. The algorithm may now select a threshold $q' \in [.01, .99]$ such that (without loss of generality) $q_0  \leq q' - .01$ and $q_1 \geq q' + .01$.

    The algorithm will use just the coin flips from the $s_{i,j}$ part of the schedule; these coin flips have bias $p(s_{i,j}) = \sin^2(\llfloor s_{i,j} \rrfloor \theta^*)$.
    More precisely, the algorithm will output ``reject $\theta_0$'' if $\wh{\bp}_{i,j} > q'$ and ``reject $\theta_1$'' if $\wh{\bp}_{i,j} \leq q'$.
    We need to show that if $\theta^* \apx{1 + .001 \eps} \theta_0$ then the algorithm outputs ``reject $\theta_0$'' with probability at most $c \theta' \delta' \eps' / |\theta_1 - \theta_0|$.
    (The case when $\theta^* \apx{1 + .001 \eps} \theta_1$ is analogous.)

    Now if $\theta^* \apx{1 + .001 \eps} \theta_0$, then $\llfloor s_{i,j} \rrfloor \theta^* \apx{1 + .001 \eps} \llfloor s_{i,j} \rrfloor  \theta_0$.
    It follows that
    \[
        \Bigl|\llfloor s_{i,j} \rrfloor \theta^* - \llfloor s_{i,j} \rrfloor  \theta_0\Bigr| \leq .001 \eps \cdot \llfloor s_{i,j} \rrfloor  \theta_0 \leq .001 s_{i,j} \cdot \eps  \theta_0.
    \]
    But we know that
    \[
        \tfrac{\pi}8 \apx{1.5} |s_{i,j} \theta_1 - s_{i,j} \theta_0|  = s_{i,j} |\theta_1 - \theta_0| \geq s_{i,j} \cdot (\eps/6.1) \theta_0,
    \]
    the last inequality because $\theta_0 \notapx{1+\eps/6.1} \theta_1$.
    Combining the above two deductions yields
    \[
        \Bigl|\llfloor s_{i,j} \rrfloor \theta^* - \llfloor s_{i,j} \rrfloor  \theta_0\Bigr| \leq .001 \cdot 1.5 \cdot 6.1 \cdot \tfrac{\pi}{8} \leq .004.
    \]
    Moreover, $\llfloor s_{i,j} \rrfloor  \theta_0$ and $s_{i,j} \theta_0$ differ by at most $2\theta_0 \leq .0002 < .001$.
    Thus we finally conclude
    \[
        \Bigl|\llfloor s_{i,j} \rrfloor \theta^* - s_{i,j} \theta_0\Bigr| \leq .005 \quad \implies \quad \Bigl|p(s_{i,j}) - q_0\Bigr| \leq .005 \quad \implies \quad p(s_{i,j}) < q' - .005
    \]
    (the first implication using that $\sin^2$ is $1$-Lipschitz).
    Then, using a Chernoff bound, we have that $\wh{\bp}_{i,j} > q'$ with probability at most $c\delta'\theta'\eps'u_i$, where $c$ is a constant that depends on $A$ (as $A$ increases, $c$ decreases). From \Cref{lem:split}, we know that $u_i \apx{1.01} \pi/(8|\theta_1 - \theta_0|)$. Then, assuming the constant~$A$ is chosen large enough as a function of~$c$, we indeed have that $\wh{\bp}_{i,j} > q'$ with probability at most $c \theta' \delta' \eps' / |\theta_1 - \theta_0|$.
\end{proof}

\subsubsection{Description of the post-processing algorithm}

We have developed the necessary tools to describe and justify the classical post-processing algorithm.

Fix values $\theta_i = \theta'(1+.001\eps')^i$ for $-V \leq i \leq V$, where $V = O(1/\eps')$ is a minimal integer such that $(1 + .001\eps')^{V} \geq 1.11$. We will refer to the $\theta_i$'s as ``nodes''. We may also assume that the number of nodes is a power of $2$ for convenience (while there are $2|V| + 1$ nodes, we can always pad these actual nodes with some dummy nodes to reach the nearest power of~$2$).

The main idea is to repeatedly use \Cref{lem:disting} to run a ``tournament'' amongst all nodes. The tournament is structured as a series of ``rounds''. In a given round, suppose we start off with $n$ nodes. Sort the nodes in order of the angles they correspond to. Now, pair up node $1$ with node $n/2 + 1$, node $2$ with node $n/2 + 2$, and so on, until node $n/2$ is paired up with node $n$. For each pair of nodes, use \Cref{lem:disting} to choose a winner to go to the next round. Note that it is possible that two nodes that are matched in the tournament do not satisfy the pre-condition of \Cref{lem:disting} because they are within a factor of $(1 + \eps/6.1)$ of each other --- we call these ``void'' match-ups. We will call the part of the tournament we have described so far the first phase --- when we see a void match-up, we stop this first phase and enter the second phase. (Note that if we never see any void match-ups and there is only one node left, we do not need the second phase and can directly output the remaining node as our estimate.)

When the first phase ends, take all remaining nodes and enter the second phase. In this phase, match up every pair of remaining nodes, and for every pair that does not form a void match-up, eliminate one of the nodes using \Cref{lem:disting}. At the end of this, output any one of the remaining un-eliminated nodes (if there are none, then the program can output an arbitrary node - this is a failure condition and we will show that, with high probability, such failure conditions will not happen).

In our algorithm, we have arranged for $\theta_{-V} \leq \theta^* \leq \theta_V$, and hence there exists a node $\theta_{i^*}$ such that $\theta_{i^*} \apx{1+.001\eps} \theta^*$. We will proceed to bound the over failure probability of the algorithm by the probability that this node loses any match-up it is a part of.

\begin{lemma}
\label{lem:NotLosingIsSufficient}
If $\theta_{i^*}$ never loses any match-up it is a part of, then the tournament outputs an estimate $\theta_{\mathrm{est}}$ such that $\theta_{\mathrm{est}} \apx{1 + \eps/6} \theta^*$.
\end{lemma}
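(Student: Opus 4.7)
The plan is to case-split on whether the tournament's first phase terminates naturally with a single surviving node, or whether a void match-up forces an early transition into phase~$2$, and in each case to leverage the hypothesis that $\theta_{i^*}$ never loses a match-up.

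In the first case, phase~$1$ reduces the pool to a single node, and since every other node was eliminated via a non-void invocation of \Cref{lem:disting} while $\theta_{i^*}$ is by hypothesis never on the losing side, the unique survivor must be $\theta_{i^*}$ itself. Thus $\theta_{\mathrm{est}}=\theta_{i^*}\apx{1+.001\eps'}\theta^*$, which is comfortably inside the required factor $1+\eps'/6$.

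In the second case, $\theta_{i^*}$ also survives every phase-$1$ match-up it participates in (again by hypothesis), so it is among the nodes carried into phase~$2$. In phase~$2$ it is paired against every other remaining node, and since it never loses it stays un-eliminated; in particular the degenerate ``no survivors'' case that the algorithm treats as a failure is ruled out, so $\theta_{\mathrm{est}}$ really is some surviving node. If $\theta_{\mathrm{est}}=\theta_{i^*}$ we are done as above. Otherwise, writing $\theta_j=\theta_{\mathrm{est}}$, the pair $\{\theta_{i^*},\theta_j\}$ was considered in phase~$2$, and since $\theta_j$ was not eliminated while $\theta_{i^*}$ never loses, that pair must have been \emph{void}, which by definition means $\theta_j\apx{1+\eps'/6.1}\theta_{i^*}$.

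Chaining the two approximations gives
\[
\theta_{\mathrm{est}}\apx{1+\eps'/6.1}\theta_{i^*}\apx{1+.001\eps'}\theta^*\ \Longrightarrow\ \theta_{\mathrm{est}}\apx{(1+\eps'/6.1)(1+.001\eps')}\theta^*,
\]
so the only remaining work is the numerical check $(1+\eps'/6.1)(1+.001\eps')\le 1+\eps'/6$ for $\eps'\le 1$. This is routine since the slack $\tfrac{1}{6}-\tfrac{1}{6.1}\approx .0027$ comfortably absorbs the extra $.001\eps'$ linear term plus the negligible quadratic cross term. The only real ``obstacle'' is being careful about the tournament bookkeeping: once one sees that the ``never loses'' hypothesis forces the output to be either $\theta_{i^*}$ itself or a node that was void against $\theta_{i^*}$, all that remains is this one-line arithmetic combination.
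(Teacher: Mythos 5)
Your proposal is correct and follows essentially the same argument as the paper: if the tournament ends in phase~1 the unique survivor must be $\theta_{i^*}$, and otherwise the output is either $\theta_{i^*}$ or a node whose match-up against $\theta_{i^*}$ was void, hence within a factor $1+\eps'/6.1$, and the product $(1+\eps'/6.1)(1+.001\eps')\le 1+\eps'/6$ closes the argument. You merely spell out the tournament bookkeeping and the arithmetic a bit more explicitly than the paper does.
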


\begin{proof}
Suppose that $\theta_{i^*}$ never loses any match-up it is a part of. If the tournament ends in the first phase itself, then the algorithm will output $\theta_{i^*}$, which is correct. If the tournament ends in the second phase, then the only possible other nodes that we could output are the ones that $\theta_{i^*}$ did not play, which can be at most a factor of $1 + \eps/6.1$ away from it. Therefore, these nodes are also at most a factor of $(1 + 0.001\eps)(1 + \eps/6.1) \leq 1 + \eps/6$ away from $\theta^*$.
\end{proof}

\begin{lemma}
\label{lem:boundOnFirstPhase}
The probability that $\theta_{i^*}$ loses any match-up in the first phase of the tournament can be upper-bounded by $c\delta'$ times a constant factor (where $c$ is the constant from \Cref{lem:disting}).
\end{lemma}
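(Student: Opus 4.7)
The plan is to bound the failure probability of $\theta_{i^*}$'s match-ups round by round, exploiting a pigeonhole observation that forces the angular gap in round $k$ to scale like $\theta'/2^{k-1}$. This makes the per-round failures from \Cref{lem:disting} form a geometric series bounded by $O(c\delta')$, avoiding the stray $\log(1/\eps')$ factor that a naive non-voidness bound would incur.

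First I would record that the (padded) initial number of nodes satisfies $n = \Theta(1/\eps')$, since $V = \Theta(1/\eps')$. In round $k$ of the first phase there are $n_k = n/2^{k-1}$ remaining nodes. After sorting, $\theta_{i^*}$ sits at some position $j$, and its opponent is at position $j \pm n_k/2$.

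The key combinatorial step I would prove is that this opponent differs from $\theta_{i^*}$ in \emph{original} sorted rank by at least $n_k/2$. The argument is pure pigeonhole: between positions $j$ and $j + n_k/2$ of the remaining sorted list there are $n_k/2$ distinct nodes, each occupying a distinct original integer rank, so the original ranks must jump by at least $n_k/2$. Since the nodes are spaced multiplicatively as $\theta_i = \theta'(1+0.001\eps')^i$ and $n_k \eps' = \Theta(1/2^{k-1})$, this rank-gap translates into an angular gap $\eta_k \gtrsim \theta'/2^{k-1}$.

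Plugging into \Cref{lem:disting}, the failure probability of $\theta_{i^*}$'s round-$k$ match is at most $c\theta'\delta'\eps'/\eta_k = O(c\delta'\eps' \cdot 2^{k-1})$. Summing over $k = 1, \dots, \log_2 n$ gives a geometric series bounded by $O(c\delta' \eps' n) = O(c\delta')$, and a union bound over the rounds completes the proof.

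The main obstacle, resolved by the pigeonhole observation above, is avoiding the $\log(1/\eps')$ blowup that the naive uniform bound $\eta_k \gtrsim \eps'\theta'$ (from non-voidness alone) would produce. The point is that even though who survives each round is adversarial, the survivors must still occupy distinct integer ranks of the original sorted order, and this integer-rank ``pigeonhole'' is what yields the geometrically shrinking upper bound on $1/\eta_k$ across rounds.
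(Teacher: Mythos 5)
Your proposal is correct and follows essentially the same route as the paper: a round-by-round union bound where the pairing at distance $n_k/2$ in the surviving sorted list forces an angular gap proportional to $\theta'/2^{k-1}$ (via the multiplicative spacing of the original nodes), so the per-round failure probabilities from \Cref{lem:disting} form a geometric series summing to $O(c\delta')$. The only difference is cosmetic: you index by round number and make explicit the pigeonhole/rank argument that the paper compresses into ``by how we chose the match-ups.''
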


\begin{proof}
    Consider an arbitrary round in the first phase of the tournament, where we have $2^j$ nodes remaining for some $j$. By how we chose the match-ups, we know that every two angles that are matched up in that round must be at least a factor of $(1 + 0.001\eps')^{2^{i - 1}}$ apart. This implies that their absolute difference is at least $(\theta'/1.1) \cdot ((1 + 0.001\eps')^{2^{i - 1}} - 1)$. Then, by \Cref{lem:disting}, the failure probability of any match-up in that round is at most  $1.1c\delta'\eps'/((1 + 0.001\eps')^{2^{i - 1}} - 1)$.
    
    Suppose the tournament begins with $n$ nodes. We can use the union bound to upper-bound the probability that $\theta_{i^*}$ loses in any of the at most $\log_2 n$ rounds by
    \[\sum_{j = 1}^{\log_2 n} \frac{1.1c\delta'\eps'}{(1 + 0.001\eps')^{2^{i - 1}} - 1} \leq  \sum_{j = 1}^{\log_2 n} \frac{1.1c\delta'\eps'}{0.001\eps2^{i - 1}} \leq 2200c\delta' \]
\end{proof}

\begin{lemma}
\label{lem:boundOnSecondPhase}
The probability that $\theta_{i^*}$ loses any match-up in the second phase of the tournament can be upper-bounded by $c\delta'$ times a constant factor (where $c$ is the constant from \Cref{lem:disting}).
\end{lemma}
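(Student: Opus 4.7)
The plan is to reduce the claim to an application of Lemma~\ref{lem:disting} and a union bound, where the main step is showing that only a constant number of match-ups are played in the second phase.

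First, I would establish that when the second phase begins, only $O(1)$ nodes remain. The first phase terminates in some round, call it the one starting with $2^j$ remaining (sorted) nodes, because a void match-up appears. By the argument used in the proof of \Cref{lem:boundOnFirstPhase}, any two remaining nodes paired in that round --- say at sorted positions $k$ and $k + 2^{j-1}$ --- correspond to original indices differing by at least $2^{j-1}$ (since the original sort order is preserved), so their angles differ by a factor of at least $(1 + 0.001\eps')^{2^{j-1}}$. A void match-up, however, has ratio at most $1 + \eps'/6.1$. Combining,
\[
    (1 + 0.001\eps')^{2^{j-1}} \;\leq\; 1 + \eps'/6.1,
\]
and linearizing the left-hand side using $(1+x)^n \geq 1 + nx$ yields $2^{j-1} \leq 1/(0.001 \cdot 6.1)$, i.e.\ $2^j = O(1)$.

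Next, since the second phase starts with at most a constant number $M = O(1)$ of remaining nodes, $\theta_{i^*}$ participates in at most $M - 1 = O(1)$ match-ups. For any non-void such match-up with opponent $\theta_j$, we have $\theta_j \notapx{1+\eps'/6.1} \theta_{i^*}$ and both are within factor $1.11$ of $\theta'$, so
\[
    |\theta_{i^*} - \theta_j| \;\geq\; \frac{\theta'}{1.1}\cdot\frac{\eps'}{6.1} \;=\; \Omega(\theta' \eps').
\]
By \Cref{lem:disting}, the probability that $\theta_{i^*}$ is rejected in that match-up is at most $c\theta'\delta'\eps' / |\theta_{i^*} - \theta_j| = O(c\delta')$. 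Void match-ups involving $\theta_{i^*}$ cannot eliminate it (there is no call to \Cref{lem:disting}), so they contribute nothing to the failure probability.

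Finally, a union bound over the $O(1)$ match-ups gives total failure probability $O(c\delta')$, as claimed. The main obstacle is the first step: carefully justifying why the match-up structure forces $2^j = O(1)$ at the moment the first void match-up appears. This relies on the preservation of the original sort order through the rounds, which guarantees the geometric lower bound on paired-node gaps even after an arbitrary sequence of eliminations.
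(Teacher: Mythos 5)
Your proposal is correct and follows essentially the same route as the paper: you bound the number of nodes entering the second phase by a constant using the fact that the void pair is $n/2$ sorted positions apart while remaining nodes are geometrically spaced by factor $(1+0.001\eps')$ (the paper phrases this as counting the nodes lying between the void pair, which is the same argument), and then you apply \Cref{lem:disting} with the $\Omega(\theta'\eps')$ gap for non-void opponents plus a union bound over the constantly many match-ups. No gaps beyond the constant-level looseness already present in the paper's own write-up.
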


\begin{proof}
    If the algorithm enters the second phase, this means that there is at least one void match-up. Let the number of nodes at this point be $n$. Then all $n/2$ nodes between the pair of nodes involved in the void match-up must be within a factor of $(1 + \eps'/6.1)$ of the first node in the void pair. Therefore, there are at most $2\log_{1 + 0.001\eps'}(1 + \eps'/6.1) \leq 12200$ nodes in total.
    
    Every node that $\theta_{i^*}$ plays in a match-up is at least a factor of $1 + \eps'/6$ away from it, which means that the absolute difference in value between the nodes is at least $(\theta'/1.1) \cdot (\eps'/6)$. By \Cref{lem:disting}, this implies that the failure probability is at most $6.6c\delta'$. Since there are at most $12200$ such match-ups, the overall failure probability is at most $81000c\delta'$ by the union bound.
\end{proof}

We can now prove \Cref{lem:estimateToEps}, which we restate below.

\begin{L9}                                     
    Given the parameters $\theta’$, $\eps’$, and $\delta’$, there is a nonadaptive algorithm that performs $O(\log(1/\delta')/(\theta'\eps'))$ queries, and outputs a result $\theta_{\mathrm{est}}$ with the following guarantee: if $\theta’ \apx{1.11} \theta^*$, then $\theta_{\mathrm{est}} \apx{1 + \eps'/6} \theta^*$ except with probability at most $\delta/2$.
\end{L9}

\begin{proof}
    The algorithm achieving this lemma involves running the Grover schedule $G$ and then post-processing using the tournament algorithm which we described. By \Cref{lem:NotLosingIsSufficient}, the failure probability of the algorithm is bounded by the probability that $\theta_{i^*}$ loses. This is bounded by the probability that it loses in the first phase or the second phase; by \Cref{lem:boundOnFirstPhase} and \Cref{lem:boundOnSecondPhase}, this is at most a constant times $c\delta'$. By choosing $c$ to be sufficiently small, we can successfully make this at most $\delta / 2$.
\end{proof}

\newpage

\bibliography{nonadaptive-grover4}

\begin{thebibliography}{10}

\bibitem{AR20}
Scott Aaronson and Patrick Rall.
\newblock Quantum approximate counting, simplified.
\newblock In {\em Symposium on Simplicity in Algorithms}, pages 24--32. SIAM,
  2020.

\bibitem{BBHT98}
Michel Boyer, Gilles Brassard, Peter H{\o}yer, and Alain Tapp.
\newblock Tight bounds on quantum searching.
\newblock {\em Fortschritte der Physik: Progress of Physics}, 46(4-5):493--505,
  1998.

\bibitem{BHMT02}
Gilles Brassard, Peter H{\o}yer, Michele Mosca, and Alain Tapp.
\newblock Quantum amplitude amplification and estimation.
\newblock {\em Contemporary Mathematics}, 305:53--74, 2002.

\bibitem{burchard2019lower}
Paul Burchard.
\newblock Lower bounds for parallel quantum counting, 2019.
\newblock \href {http://arxiv.org/abs/1910.04555} {\path{arXiv:1910.04555}}.

\bibitem{grinko2020iterative}
Dmitry Grinko, Julien Gacon, Christa Zoufal, and Stefan Woerner.
\newblock Iterative quantum amplitude estimation, 2020.
\newblock \href {http://arxiv.org/abs/1912.05559} {\path{arXiv:1912.05559}}.

\bibitem{Gro96}
Lov Grover.
\newblock A fast quantum mechanical algorithm for database search.
\newblock In {\em Proceedings of the 28th Annual ACM Symposium on Theory of
  Computing}, pages 212--219, 1996.

\bibitem{nakaji2020faster}
Kouhei Nakaji.
\newblock Faster amplitude estimation, 2020.
\newblock \href {http://arxiv.org/abs/2003.02417} {\path{arXiv:2003.02417}}.

\bibitem{NW99}
Ashwin Nayak and Felix Wu.
\newblock The quantum query complexity of approximating the median and related
  statistics.
\newblock In {\em Proceedings of the 31st Annual ACM Symposium on Theory of
  Computing}, pages 384--393, 1999.

\bibitem{SURTOY20}
Yohichi Suzuki, Shumpei Uno, Rudy Raymond, Tomoki Tanaka, Tamiya Onodera, and
  Naoki Yamamoto.
\newblock Amplitude estimation without phase estimation.
\newblock {\em Quantum Information Processing}, 19(2):75, 2020.

\bibitem{Wie19}
Chu-Ryang Wei.
\newblock Simpler quantum counting.
\newblock {\em Quantum Information and Computation}, 19(11\&12):0967--0983,
  2019.

\end{thebibliography}

\end{document}